\newtheorem{thm}{Theorem}[section]
\newtheorem{lem}[thm]{Lemma}
\newtheorem{prop}[thm]{Proposition}
\theoremstyle{definition}
\theoremstyle{definition}
\newtheorem{defn}[thm]{Definition}
\theoremstyle{remark}
\newtheorem{rem}[thm]{Remark}
\numberwithin{equation}{section}
\newcommand{\Rmnum}[1]{\expandafter\@slowromancap\romannumeral #1@}
\newcommand\restr[2]{{% we make the whole thing an ordinary symbol
  \left.\kern-\nulldelimiterspace % automatically resize the bar with \right
  #1 % the function
 % \vphantom{\big|} % pretend it's a little taller at normal size
  \right|_{#2} % this is the delimiter
  }}
\begin{document}

\title{Odd period cycles and ergodic properties in price dynamics for an exchange economy}
\author{Tomohiro Uchiyama\\
Faculty of International Liberal Arts, Soka University,\\ 
1-236 Tangi-machi, Hachioji-shi, Tokyo 192-8577, Japan\\
\texttt{Email:t.uchiyama2170@gmail.com}}
\date{}
\maketitle 

\begin{abstract}
In the first part of this paper (Sections~1-4), we study a standard exchange economy model with Cobb-Douglas type consumers and give a necessary and sufficient condition for the existence of an odd period cycle in the Walras-Samuelson (tatonnement) price adjustment process.  We also give a sufficient condition for a price to be eventually attracted to a chaotic region. In the second part (Sections~5 and~6), we investigate ergodic properties of the price dynamics showing that the existence of chaos is not necessarily bad. (The future is still predictable on average.) Moreover, supported by a celebrated work of Avila et al. (Invent.~Math., 2003), we conduct a sensitivity analysis to investigate a relationship between the ergodic sum (of prices) and the speed of price adjustment. We believe that our methods in this paper can be used to analyse many other chaotic economic models.  
\end{abstract}

\noindent \textbf{Keywords:} chaos, odd period cycle, exchange economy, price dynamics, ergodic theory, sensitivity analysis\\
\noindent JEL classification: D11, D41, D51 
\section{Introduction}
In this paper, we study a standard exchange economy model with two consumers of Cobb-Douglas type and two goods. Denote the two consumers by $i =1,2$ and the two goods by $x$ and $y$. Let $\alpha, \beta \in (0,1)$. We assume that the consumer $1$ has the utility function $u^1(x,y)=x^\alpha y^{1-\alpha}$ and the consumer $2$ has the utility function $u^2(x,y)=x^{\beta} y^{1-\beta}$ in the consumption space $\mathbb{R}^2_{+}$. We also assume that the consumer $1$ has the initial endowment $w^1=(\bar x,0)$ and the consumer $2$ has the initial endowment $w^2=(0,\bar y)$ where $\bar x, \bar y >0$. Fix the price of good $y$ as $1$ and that of good $x$ as $p>0$. Then by the standard optimisation result under the budget constraints, we obtain the excess demand function $z(p)$ for good $x$, that is given by $z(p)=\bar y\beta/p - \bar x(1-\alpha)$. Now we define the Walras-Samuelson (tatonnement) price adjustment process by
\begin{equation}
p_{t+1} = f(p_t) = p_{t}+\lambda z(p_t) = p_{t}+\lambda[\bar y\beta/p_t-\bar x(1-\alpha)] \textup{ where } \lambda\in \mathbb{R}_{++}. \label{dynamics}
\end{equation} 
Note that $\lambda$ denotes the speed of adjustment and $p_j$ denotes the price of good $x$ at time $j$.\\

Let $E=(0,1)\times (0,1)$. Then the following (that is a slight extension of~\cite[Prop.~9.10]{Bhattacharya-book}) is not difficult to show (see Section~4 for a proof):
\begin{prop}\label{LiYorke}
There exist open sets $A\subset E$, $B\subset \mathbb{R}^2_{++}$, and $C\subset \mathbb{R}_{++}$ such that if $(\alpha, \beta)\in A$, $(\bar x, \bar y)\in B$, and $\lambda\in C$ then the process (\ref{dynamics}) has a period three cycle (hence exhibits a Li-Yorke chaos). 
\end{prop}
It is well-known that the existence of a period three cycle implies that of a Li-Yorke chaos (by the famous Li-Yorke theorem~\cite[Thm.~1]{LiYork-PeriodThree-Month}), and this argument has been used a lot in economic literature, see~\cite{BenhabibDay-Erratic-Letters},~\cite{Benhabib-Erratic-JEDC},~\cite{DayShafer-Keynesian-Macro},~\cite{NishimuraYano-Prog-soliton} for example. However, this is a bit overkill: by~\cite[Chap.~\Rmnum{2}]{Block-book}, we know that the existence of a cycle of \emph{any} odd length (not necessarily of period three) implies that of a Li-Yorke chaos. In the first part of this paper, extending Proposition~\ref{LiYorke}, we obtain:

\begin{thm}\label{main}
Let $\frac{\bar y\beta}{{\bar x}^2(1-\alpha)^2} < \lambda < \frac{4\bar y\beta}{{\bar x}^2(1-\alpha)^2}$. Then the map $f$ in Equation (\ref{dynamics}) has the following properties:
\begin{enumerate}
\item{$\restr{f}{E}\in \mathfrak{G}$.}
\item{$\restr{f}{E}$ has an odd period cycle if and only if $\frac{25 \bar y\beta}{9 {\bar x}^2 (1-\alpha)^2} < \lambda < \frac{4\bar y\beta}{{\bar x}^2 (1-\alpha)^2}$.}
\item{The second iterate $(\restr{f}{E})^2$ is turbulent if and only if $\frac{25 \bar y\beta}{9 {\bar x}^2 (1-\alpha)^2} \leq \lambda < \frac{4\bar y\beta}{{\bar x}^2 (1-\alpha)^2}$.} 
\item{If $\frac{25 \bar y\beta}{9 {\bar x}^2 (1-\alpha)^2} \leq \lambda < \frac{4\bar y\beta}{{\bar x}^2 (1-\alpha)^2}$, then the closed interval $E$ is attracting for $f$, that is, $f^n(p)\in E$ for some $n\in \mathbb{N}$.} \end{enumerate}
\end{thm}

Our first main result of this paper (Theorem~\ref{main}) extends Proposition~\ref{LiYorke} in the following way: 1.~Replace a period three cycle with an odd period cycle (or a turbulence for the second iterate), 2.~Give a specific (algebraic) form of the sets $A$, $B$, and $C$, 3.~Give a \emph{necessary and sufficient} condition for the existence of an odd period cycle (or a turbulence for the second iterate) rather than giving a sufficient condition only (that is usually done in many economic literature such as~\cite{BenhabibDay-Erratic-Letters},~\cite{Benhabib-Erratic-JEDC},~\cite{DayShafer-Keynesian-Macro},~\cite{NishimuraYano-Prog-soliton}). 

We defer detailed explanations, (mathematical/economic) interpretations, and all the necessary definitions (such as $E$, $\mathfrak{G}$, and "turbulent") for Theorem~\ref{main} to Section~2. Here, we just note two things: (1)~$E$ is some compact interval in $\mathbb{R}$ and $\mathfrak{G}$ is a "nice" class of continuous unimodal maps. (2)~The upper bound for $\lambda$ directly follows from the requirement $p>0$. So, the upper bound in Theorem~\ref{main} is not really interesting. The real meat is in the lower bounds.

In the second part of this paper (Section 5 onwards), we investigate the price dynamics defined by Equation (\ref{dynamics}) using a probabilistic method. The upshot of Theorem~\ref{main} is that if $\lambda$ (the speed of price adjustment) is large enough (with some bound to keep $p>0$), the price dynamics show chaotic behaviours for (uncountable) many initial $p$ (see the definition of a Li-Yorke chaos in Section 2). This means that it is hard to predict the future ($f^n(p)$ for large $n$) for these $p$. This sounds pretty bad, but the main results in the second part of the paper  (Theorem~\ref{finalThm}) show that this not necessarily so. Roughly speaking, we can still "predict" the future on average. 

Our (numerical/theoretical) argument in the second part of the paper use ergodic theory (a quick overview of ergodic theory is given in Section 5 to make the paper self-contained). What we need now is a bare minimum: we just need one definition.
\begin{defn}
$\lim_{n\rightarrow \infty}\frac{1}{n}\sum_{k=0}^{n-1}f^{k}(p)$ is called the \emph{ergodic sum of $f$ with respect to $p$}. (If the limit exists, the ergodic sum of $f$ is same for almost all $p$, so we omit "with respect to $p$".)
\end{defn}

We are ready to state our main result in the second part of the paper. Here, to simplify the exposition and to obtain sharp numerical results, we fix $(\alpha,\beta)=(0.75,0.5)$, $(\bar x, \bar y)=(4,2)$. A similar analysis (as follows) can be done for any $\alpha, \beta, \bar x, \bar y$ (our choice of the parameters is completely ad hoc and our method is quite generic). We obtain

\begin{thm}\label{finalThm}
For $1<\lambda<4$, the ergodic sums of $f$ are as in Figure~\ref{fig15} (possibly except some $\lambda$ values whose total Lebesgue measure is $0$). 
\end{thm}

\begin{figure}[h!]
	\begin{center}
    	\includegraphics[scale=0.6]{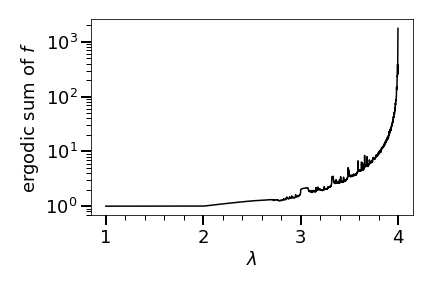}
	\end{center}
    \caption{Ergodic sums of $f$}\label{fig15}
\end{figure}

We defer the detailed explanations/comments on Theorem~\ref{finalThm} to Sections 5 and 6, but to be honest, we found it very surprising that the ergodic sums of $f$ change so smooth and stable (except a few bumps) as $\lambda$ increases considering the fact (as can be seen from the bifurcation diagram of $f$,~Figure~\ref{fig12}) that as $\lambda$ increases $f^n(p)$ go through a quite a bit of qualitative changes from a stable fixed point, attracting periodic orbit of different periods, and finally chaotic behaviours.

Here is the structure of the rest of the paper. In Section~2, we give necessary definitions involving chaos and explain how we got our first main theorem (Theorem~\ref{main}) together with its (mathematical/economic) interpretations. Next, in Section~3, we give the proofs of Theorem~\ref{main} and Proposition~\ref{LiYorke}. In Section~4, after giving a quick overview of ergodic theory, we explain a generic method to obtain our second main result (Theorem~\ref{finalThm}) with its important stepping stone (Theorem~\ref{ergThm1}). Our argument here is a delicate combinations of results from ergodic theory and numerical computations. In Section~5, we give some (mathematical) interpretations of Theorem~\ref{finalThm} including its connection to a deep result in ergodic theory (Proposition~\ref{Avila}) of Avila et.al.  

All programming files used for numerical calculations and for generating plots in this paper (Jupyter notebooks) are available upon request.

\section{Odd period cycles}
\subsection{Definitions involving a chaos}
First, we clarify what we mean by a Li-Yorke chaos, a turbulence, and a topological chaos. (There are several definitions of a chaos in the literature.) The following definitions are taken from~\cite[Def.~5.1]{Ruette-book} and~\cite[Chap.~\Rmnum{2}]{Block-book}. Let $g$ be a continuous map of a closed interval $I$ into itself: 

\begin{defn}
We say that $g$ exhibits a \emph{Li-Yorke chaos} if there exists an uncountable \emph{scrambled set} $S\subset I$, that is, for any $x,y\in S$ we have 
\begin{equation*}
\limsup_{n\rightarrow \infty}\mid g^n(x)-g^n(y)\mid > 0 \textup{ and } \liminf_{n\rightarrow \infty}\mid g^n(x)-g^n(y)\mid=0,
\end{equation*}
and for $x\in S$ and $y$ being a periodic point of $g$,
\begin{equation*}
\limsup_{n\rightarrow \infty}\mid g^n(x)-g^n(y)\mid > 0.
\end{equation*}
\end{defn}

\begin{defn}
We call $g$ \emph{turbulent} if there exist three points, $x_1$, $x_2$, and $x_3$ in $I$ such that $g(x_2)=g(x_1)=x_1$ and $g(x_3)=x_2$ with either $x_1<x_3<x_2$ or $x_2<x_3<x_1$. Moreover, we call $g$ \emph{(topologically) chaotic} if some iterate of $g$ is turbulent.  
\end{defn}
It is known that a map $g$ is topologically chaotic if and only if $g$ has a periodic point whose period is not a power of $2$, see~\cite[Chap.~\Rmnum{2}]{Block-book}. This implies that a map $g$ is topologically chaotic if and only if the topological entropy of $g$ is positive, see~\cite[Chap.~\Rmnum{8}]{Block-book}. In the first part of this paper, we focus on a topological chaos (or a positive topological entropy) in the context of price dynamics. See~\cite{Ruette-book} for more characterisations and (subtle) mutual relations of various kinds of chaos.

\subsection{Explanations and Interpretations of Theorem~\ref{main}} 

Here, we explain a key result to obtain Theorem~\ref{main}. Along the way, we  give definitions for all the unexplained notation (such as $\mathfrak{G}$ and $E$) in Theorem~\ref{main}. First, we recall the following mathematical result characterising the existence of a topological chaos for a unimodal interval map~\cite[Cor.~3]{Deng-TopChaos-JET}. Theorem~\ref{main} is a (highly non-trivial as seen in Section 3) consequence (or a special case) of~\cite[Cor.~3]{Deng-TopChaos-JET}. Let $\mathfrak{G}$ be the set of continuous maps from a closed interval $[a, b]$ to itself so that an arbitrary element $g\in \mathfrak{G}$ satisfies the following two properties:
\begin{enumerate}
\item{there exists $m\in (a,b)$ with the map $g$ strictly decreasing on $[a,m]$ and strictly increasing on $[m,b]$.}
\item{$g(a)>a$, $g(b)\leq b$, and $g(x)<x$ for all $x\in[m,b)$.}
\end{enumerate}
For $g\in \mathfrak{G}$, let $\Pi:=\{x\in [a,m]\mid g(x)\in [a,m] \textup{ and } g^2(x)=x\}$. Now we are ready to state~\cite[Cor.~3]{Deng-TopChaos-JET}:
\begin{prop}\label{ChaosThm}
Let $g\in \mathfrak{G}$. The map $g$ has an odd-period cycle if and only if $g^2(m) > m$ and $g^3(m) > \max\{x\in \Pi\}$ and the second iterate $g^2$ is turbulent if and only if $g^2(m) > m$ and $g^3(m) \geq \min\{x\in \Pi\}$. 
\end{prop}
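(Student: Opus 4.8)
My plan is to normalise the map and then prove each of the two equivalences by constructing explicit dynamical objects --- a horseshoe, an odd loop --- for the sufficiency directions and by a confinement/renormalisation argument for the necessity directions.

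\medskip
\noindent\textbf{A reduction.} Since $g$ is strictly decreasing on $[a,m]$ with $g(a)>a$ and $g(m)<m$, there is a unique fixed point $p^{*}\in(a,m)$ and none in $[m,b]$, and $c:=g(m)=\min g$ satisfies $a\le c\le p^{*}$. The $2$--cycles of $g$ in $[a,m]$ straddle $p^{*}$, so $\Pi\subseteq[c,m]$, the map $g$ interchanges $\nu:=\min\Pi$ and $\mu:=\max\Pi$, and $M^{*}:=[\nu,\mu]$ is a $g$--invariant interval on which $g$ is a decreasing self--map; in particular $g|_{M^{*}}$ has only periods $1$ and $2$, so no odd cycle of length $\ge3$ and no turbulent iterate can live inside $M^{*}$. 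If $g^{2}(m)\le m$, then $g([c,m])=[c,g^{2}(m)]\subseteq[c,m]$, the forward orbit of every point eventually enters $[c,m]$ (a point of $(m,b]$ decreases monotonically until it drops into $[c,m]$, while $[a,c)$ is mapped into $[c,b]$ in one step), and $g|_{[c,m]}$ is monotone; hence $g$ has no odd cycle of length $\ge3$ and $g^{2}$ is not turbulent, which settles both ``only if'' parts in this case. If $g^{2}(m)>m$, the same computation shows $[c,g^{2}(m)]$ is forward invariant and globally absorbing and that $g$ restricted to it again lies in $\mathfrak{G}$ with the same $\Pi$ and the same $g^{3}(m)$; so I may assume $g(m)=a$ and $g^{2}(m)=b$ (equivalently $g(a)=b$), in which case the left branch is onto, $g^{3}(m)=g(b)$, $\Pi$ is the fixed--point set of the increasing map $g^{2}$ on $[\xi,m]$, where $\xi$ is the preimage of $m$ in $[a,m]$, and $\xi<\nu\le p^{*}\le\mu<m$.

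\medskip
\noindent\textbf{Turbulence of $g^{2}$.} If $g^{3}(m)\ge\nu$ I would exhibit $\{[\mu,m],[m,b]\}$ as a horseshoe: $g^{2}$ is increasing on $[\nu,m]$ with $g^{2}(\mu)=\mu$, $g^{2}(m)=b$, so $g^{2}([\mu,m])=[\mu,b]$; and $g([m,b])=[a,g(b)]$, so $g^{2}([m,b])\supseteq g([a,m])=[a,b]$ when $g(b)\ge m$, and $g^{2}([m,b])=[g(g(b)),b]\supseteq[\mu,b]$ when $\nu\le g(b)<m$ (since $g(g(b))\le g(\nu)=\mu$); a horseshoe forces turbulence. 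If instead $g^{3}(m)<\nu$, then $g([a,\nu])=[\mu,b]$ and $g([\mu,b])=[a,\nu]$, so $[a,\nu]$, $M^{*}$ and $[\mu,b]$ are $g^{2}$--invariant and cover $[a,b]$; $g^{2}|_{M^{*}}$ is increasing, and $g^{2}$ restricts to $[a,\nu]$ and $[\mu,b]$ as unimodal maps for which extremality of $\Pi$ forbids a fixed point of $g^{2}$ in $(\mu,m)$ or in $(\xi,\nu)$, hence $g^{2}>\mathrm{id}$ on $(\mu,m)$ and $g^{2}<\mathrm{id}$ on $(\xi,\nu)$; a short case analysis then shows none of the three restrictions admits the configuration $x_{1},x_{2},x_{3}$ of the definition of turbulence, so $g^{2}$ is not turbulent.

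\medskip
\noindent\textbf{Odd cycles.} If $g^{3}(m)>\mu$, then since $g^{2}$ is continuous, fixes $\mu$, has $g^{2}(m)=b>m$ and no fixed point in $(\mu,m)$, we get $g^{2}>\mathrm{id}$ on $(\mu,m)$; hence $g^{3}(m),g^{5}(m),g^{7}(m),\dots$ strictly increase while they remain in $(\mu,m)$ and must eventually leave it, so there is a least odd $n\ge3$ with $g^{n}(m)\ge m$, for which $\mu<g^{3}(m)<\dots<g^{n-2}(m)<m\le g^{n}(m)$ and $g^{4}(m),\dots,g^{n-1}(m)\in[a,\nu)$. From this configuration I would construct, from successive $g$--images of a small interval with one endpoint at $\mu$, a loop of intervals of odd length with pairwise disjoint interiors; the fundamental covering lemma then yields a point of period exactly $n$, and by Sharkovsky's theorem $g$ has an odd cycle. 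If $g^{3}(m)\le\mu$: when $g^{3}(m)<\nu$ every periodic orbit lies in $M^{*}$ (periods $\le2$) or in $[a,\nu)\cup(\mu,b]$, whose halves $g$ interchanges, so has even period; when $\nu\le g^{3}(m)\le\mu$ the orbit of $m$ is eventually trapped in $M^{*}$, on which $g^{2}$ is increasing, so passing to the unimodal map $g^{2}$ on the $g^{2}$--invariant absorbing interval $[a,\mu]$ one sees its turning--point orbit converges to a fixed point, whence zero entropy and in particular no cycle of odd length $\ge3$; in either case $g$ has no odd cycle.

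\medskip
\noindent\textbf{Main obstacle.} The real work is the explicit odd loop in the third step --- choosing the intervals and checking their interiors are pairwise disjoint so that the forced point genuinely has period $n$ --- and the case analysis excluding turbulence of the two unimodal restrictions in the second step. These, together with the degenerate situations ($\nu=\mu=p^{*}$; the orbit of $m$ meeting $\nu$, $\mu$ or $m$ exactly; the borderline $g^{3}(m)=\mu$), are what fix whether each inequality is strict. I expect the odd--loop construction to be the main difficulty.
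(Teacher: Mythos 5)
First, a point of reference: the paper does not prove this proposition at all --- it is quoted verbatim from Deng--Khan--Mitra (2022), Cor.~3, and only applied; so your attempt can only be judged on its own terms, not against an in-paper argument. Your overall architecture (restriction to the absorbing core so that $g(m)=a$, $g(a)=b$, the invariant interval $M^{*}=[\nu,\mu]$, covering relations for sufficiency, invariant-partition arguments for necessity) is sensible and close in spirit to how such results are proved, and several pieces check out: the reduction, the case $g^{2}(m)\le m$, the horseshoe $\{[\mu,m],[m,b]\}$ when $g^{3}(m)\ge\nu$, and the alternation argument when $g^{3}(m)<\nu$.

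However, there are genuine gaps. (1) The step that is actually wrong as stated is the subcase $\nu\le g^{3}(m)\le\mu$ of the odd-cycle necessity: ``the turning-point orbit of the unimodal map $g^{2}|_{[a,\mu]}$ converges to a fixed point, whence zero entropy'' is a false inference --- the full tent map has its turning point mapped onto a fixed point in two steps and has entropy $\log 2$ --- and zero entropy is in any case untenable in this regime, where the other half of the proposition makes $g^{2}$ turbulent, hence $g$ of positive entropy. The fix is simpler than what you attempted: your own partition argument for $g^{3}(m)<\nu$ extends verbatim to all of $g^{3}(m)\le\mu$, because then $g\bigl((\mu,b]\bigr)\subseteq[a,\mu]$ and $g\bigl([a,\nu)\bigr)\subseteq(\mu,b]$, so a periodic orbit either lies in the invariant interval $M^{*}$ (period $\le 2$) or avoids $M^{*}$ and must alternate between $[a,\nu)$ and $(\mu,b]$, forcing even period. (2) The two remaining load-bearing steps are only promised, not performed: the odd-loop/covering construction giving a cycle of odd period $n\ge 3$ (you flag this yourself), and the ``short case analysis'' excluding turbulence of $g^{2}$ when $g^{3}(m)<\nu$; the latter must in particular deal with the fixed point of $g^{2}$ in $(m,b)$ (the $2$-cycle of $g$ straddling $m$), which your outline never mentions, since $\Pi$ does not control it. (3) In the reduction you silently assume that turbulence of $g^{2}$ can be tested on the absorbing interval $[g(m),g^{2}(m)]$; this is immediate for periodic orbits but not for the turbulence witnesses $x_{2},x_{3}$, which need not lie in that interval a priori (only $x_{1}$, being $g^{2}$-fixed, must), so the ``only if'' direction for turbulence needs an extra argument there. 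Until (1)--(3) are supplied, the proposal is an outline rather than a proof.
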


We keep the same notation $f$, $\bar x$, $\bar y$, $\alpha$, $\beta$, and $\lambda$ from Equation (\ref{dynamics}). We write $\restr{f}{E}$ for the restriction of $f$ to the closed interval $E:=[f(\sqrt{\bar y\lambda\beta}),f^2(\sqrt{\bar y\lambda\beta})+\sqrt{\bar y\lambda\beta}]$. (We will explain the significance of the number $\sqrt{\bar y\lambda\beta}$ and the reason for the choice of $E$ in the next section.) Note that in the next section we will show that $f(\sqrt{\bar y\lambda\beta})<f^2(\sqrt{\bar y\lambda\beta})+\sqrt{\bar y\lambda\beta})$ (so, $E$ is a non-degenerate closed interval) and that $\restr{f}{E}$ is a map to $E$.

Now, we give several comments/interpretations on Theorem~\ref{main}. First, in the next section, we will show that the condition $\frac{\bar y\beta}{{\bar x}^2(1-\alpha)^2} < \lambda < \frac{4\bar y\beta}{{\bar x}^2(1-\alpha)^2}$ in Theorem~\ref{main} is the weakest condition for $f$ to be economically meaningful (to keep $p>0$) and also for $\restr{f}{E}$ to be in $\mathfrak{G}$. As stated in Introduction, the upper bound for $\lambda$ is not really interesting and the real meat is in the lower bounds in parts 2,3, and 4. 

Second, it is well-know that the vertical stretch the graph of $f$ controls the existence of a chaos: if we stretch the graph further, we are likely to obtain a chaos. Now, looking at Equation (\ref{dynamics}), we see that if we make $\alpha$, $\beta$, $\bar y$ small, or $\bar x$ large, the "valley" of the graph of $f$ goes deep down. Also, it is clear that a small $\lambda$ makes the graph of $f$ "flat". Our lower bound $L:=\frac{25 \bar y\beta}{9 {\bar x}^2 (1-\alpha)^2}$ agrees with these observations: $L$ is an increasing function of $\alpha$, $\beta$, $\bar y$ and a decreasing function of $\bar x$. In other words, it is easy to generate a chaos (small $\lambda$ gives a chaos) if $\alpha$, $\beta$, or $\bar y$ is small or $\bar x$ is large. We give an economic explanation for this. By looking at the form of the excess demand function $z(p)=\bar y\beta/p - \bar x(1-\alpha)$ (or the demand function of each consumer for good $x$, that is, $x=\alpha \bar x$ for consumer 1 and $x=\frac{\beta \bar y}{p}$ for consumer 2 respectively), we see that $p$ goes up very sharply (almost irrespective of the parameter values) after $p$ gets close to $0$. To generate a chaos, $p$ needs to drop sharply afterwords, that is possible (or at least easy) if $\alpha$, $\beta$, or $\bar y$ is small or $\bar x$ is large for the following reasons: 1.~if $\alpha$ or $\beta$ is small, the demand for good $1$ is weak (thus $p$ drops sharply), 2.~if $\bar y$ is small, then the demand of consumer 2 for good $x$ (that is excessively strong when $p$ is close to $0$) drops sharply (since the budget for consumer 2 is tight), 3.~if $\bar x$ is large, when $p$ is very high, a large excess supply happens. 

Third, part 4 of Theorem~\ref{main} shows that if $\lambda$ is sufficiently large, for any initial $p>0$, our price dynamics eventually trap $p$ inside a (Li-Yorke) chaotic region $E$. This is interesting since parts 1, 2, and 3 say nothing about what is going on outside of $E$. This sort of analysis is not done in~\cite{Deng-TopChaos-JET}. %We would like to obtain a similar characterisation of this behaviour (in terms of the iterates of $g(m)$ as in Proposition~\ref{ChaosThm}) for a general continuous unimodal map in the future.  

\subsection{Odd period (but no period three) cycles}

To end this section, we give an application of Theorem~\ref{main}. By the Sharkovsky order in~\cite{Sharkovsky-order-Ukr}, we know that if the map $f$ has a cycle of period three, then it also has a cycle of any odd order. Thus, it is natural to guess that if $\lambda$ is close to the lower bound for $\lambda$ in Theorem~\ref{main} (but still above the lower bound), the map $f$ has an odd period cycle but no period three cycle. We give one example where this is actually the case. Using our concrete characterisation of the existence of an odd period cycle, we obtain:
\begin{prop}\label{NoPeriodThree}
Let $\bar x = 4, \bar y =2, \alpha=0.75, \beta=0.5$. If $2.77<\lambda\leq 3.00$, then the map $f$ in Equation (\ref{dynamics}) has an odd period cycle but no period three cycle. 
\end{prop}
Note that in this case, by Theorem~\ref{main} a necessary and sufficient condition for the existence of an odd period cycle is $\frac{25 \bar y\beta}{9 {\bar x}^2 (1-\alpha)^2}=2.77<\lambda<4=\frac{4\bar y\beta}{{\bar x}^2 (1-\alpha)^2}$. (We use this condition in Section 6.) Our numerical computation (see Section 4 for details) shows that Proposition~\ref{NoPeriodThree} is almost an if and only if statement: for $\lambda\geq 3.01$, we get a period three cycle.  Although many examples of this sort can be obtained by the same method, a complete characterisation for the existence of a period three cycle for a unimodal map is not known. (Thus it is not possible to obtain the precise $\lambda$ value where a bifurcation happens.) We leave it for a future work.

\section{Proof of Theorem~\ref{main}}
In the following proof, most results follow from direct (but a fairly complicated) algebraic calculations. We give some sketches of our manipulations while pointing some important steps out rather than writing all the details. All calculations can be checked by a computer algebra system, say, Magma~\cite{magma}, Python~\cite{10.5555/1593511}, etc.\\

First, for the function $f$ in Equation (\ref{dynamics}), we have $f'(p)=1-\frac{\bar y\lambda\beta}{p^2}$ and $f''(p)=\frac{2\bar y\lambda\beta}{p^3}>0$ for any $p>0$. So, $f$ is strictly convex (unimodal) and takes its minimum at $p=\sqrt{\bar y\lambda\beta}$. We sometimes write $s$ for $\sqrt{\bar y\lambda\beta}$ to ease the notation. First of all, since we assume that $p>0$, we must have $f(p)>0$ for any $p>0$, hence $f(s)>0$. This gives that $\lambda < \frac{4\bar y\beta}{{\bar x}^2(1-\alpha)^2}$.

Next, we want (a restriction of) $f$ to be in $\mathfrak{G}$. Note that any function $f$ in $\mathfrak{G}$ must have some $m$ in its interior of the domain, say $[a,b]$, with $f$ strictly decreasing on $[a,m]$ and strictly increasing on $[m,b]$. Since our $f$ is unimodal, this forces $m$ to be $s$. Moreover $f$ needs to satisfy $f(p)<p$ for all $p\in [s,b)$. So in particular, we must have $f(s)<s$. This implies $\frac{\bar y\beta}{{\bar x}^2(1-\alpha)^2}<\lambda$. Note that $s<f^2(s)+s$ since $f^2(s)>0$ (under the condition $\lambda < \frac{4\bar y\beta}{{\bar x}^2(1-\alpha)^2}$), so we have $f(s)<s<f^2(s)+s$. Now we set $a:=f(s)$ and $b:=f^2(s)+s$. (Now it is clear that $[a,b]$ is non-degenerate.)

\begin{lem}\label{FirstLem}
If $\frac{\bar y\beta}{{\bar x}^2(1-\alpha)^2} < \lambda < \frac{4\bar y\beta}{{\bar x}^2(1-\alpha)^2}$, then 
$\restr{f}{E} \in \mathfrak{G}$.
\end{lem}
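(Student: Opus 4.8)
The plan is to verify the two defining properties of $\mathfrak{G}$ for $g := \restr{f}{E}$ on the interval $[a,b]$ with $a = f(s)$ and $b = f^2(s)+s$, where $s = \sqrt{2\lambda\beta}$ is the unique minimizer of $f$. Property (1) is essentially already in hand from the discussion preceding the lemma: $f$ is strictly convex with minimum at $s$, so it is strictly decreasing on $[a,s]$ and strictly increasing on $[s,b]$, and the hypothesis $\lambda < \frac{\beta}{2(1-\alpha)^2}$ guarantees $f^2(s) > 0$, hence $a = f(s) < s < f^2(s) + s = b$, so $m := s$ lies in the interior $(a,b)$. What still needs checking for property (1) is that $g$ actually maps $E$ into $E$; I would get this from the convexity/unimodality together with the endpoint estimates used for property (2).

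For property (2) I would check the three conditions $g(a) > a$, $g(b) < b$, and $g(x) < x$ for all $x \in [m,b) = [s,b)$. The last of these is the crux: since $f$ is strictly convex, the set $\{p > 0 : f(p) < p\}$ is an open interval, and $f(s) < s$ (which is exactly the hypothesis $\lambda > \frac{\beta}{8(1-\alpha)^2}$, as already noted) places $s$ inside it; so it suffices to show $f(b) \le b$, i.e. $f(f^2(s)+s) \le f^2(s)+s$, because then the whole interval $[s,b]$ lies in the region where $f(p) \le p$, with strict inequality on $[s,b)$ by convexity (the graph of $f$ can touch the diagonal at most at the right endpoint). The inequality $f(b) < b$ will also give $g(b) < b$ directly. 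For $g(a) > a$: $a = f(s)$ is the global minimum of $f$, and $f(a) = f(f(s)) = f^2(s)$; I need $f^2(s) > f(s)$. This should follow because $f(s) < s$ forces (again by convexity and the location of the fixed point $p^* = \sqrt{2\lambda\beta/\,?}$ — more precisely the fixed point where $\lambda z(p) = 0$, namely $p^* = \frac{\beta}{2(1-\alpha)}$) that iterating from $a = f(s)$ moves back up toward $s$, so $f(a) > a$.

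Concretely, I would reduce every one of these inequalities to a polynomial inequality in the single dimensionless parameter $t := \frac{8\lambda(1-\alpha)^2}{\beta}$, which by hypothesis ranges over $(1,4)$. Writing $s$, $f(s)$, $f^2(s)$, and $f(f^2(s)+s)$ in terms of $\lambda, \beta, \alpha$ and substituting, each required inequality ($f(s) < s$, $f^2(s) > f(s)$, $f(f^2(s)+s) < f^2(s)+s$, and the containment $g(E) \subseteq E$) becomes a rational inequality whose sign on $1 < t < 4$ can be settled by clearing denominators and factoring; I would present the resulting factorizations and note they are routinely checkable by computer algebra, in keeping with the paper's stated style. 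The main obstacle is purely computational bookkeeping: $f^2(s) + s$ is already a somewhat involved expression, so $f$ evaluated at it produces a rational function whose numerator is a polynomial of moderate degree in $t$, and one must correctly identify the relevant factor and confirm it keeps a constant sign on the open interval $(1,4)$ while possibly vanishing at an endpoint (which is why the boundary case appears as an equality in parts (2)–(3) of Theorem~\ref{main} but as strict inequalities here). No conceptual difficulty is expected beyond organizing this algebra cleanly.
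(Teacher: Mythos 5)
Your proposal is correct in substance and follows the same skeleton as the paper's proof: take $m=s=\sqrt{2\lambda\beta}$, $a=f(s)$, $b=f^2(s)+s$, and check the monotonicity structure, the endpoint conditions $f(a)>a$, $f(b)<b$, the inequality $f(p)<p$ on $[s,b)$, and the self-map property via unimodality. The difference is in how the individual inequalities are settled. The paper exploits the identity $p-f(p)=-\lambda\left[\frac{2\beta}{p}-4(1-\alpha)\right]$, which is increasing in $p$; hence $f(p)<p$ on all of $[s,b)$ and $f(b)<b$ both reduce in one line to the single inequality $f(s)<s$, i.e.\ exactly the hypothesis $\lambda>\frac{\beta}{8(1-\alpha)^2}$, so no evaluation of $f$ at $b=f^2(s)+s$ is ever needed. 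Your convexity argument anchored at the two endpoints is valid, but it makes $f(b)\le b$ a separate input that you then propose to verify by brute force in $t=8\lambda(1-\alpha)^2/\beta$ -- the most laborious step of your plan, and one the paper's monotonicity observation renders unnecessary (indeed $f(p)<p$ holds for every $p>$ the fixed point $\frac{\beta}{2(1-\alpha)}$, and $b>s>$ that fixed point). Similarly, for $f(a)>a$ the paper writes $f(f(s))-f(s)$ in closed form as a positive multiple of a perfect square in $\sqrt{\lambda}$, vanishing only at $\lambda=\frac{\beta}{8(1-\alpha)^2}$. Be aware that your heuristic justification ("$f(s)<s$ forces iteration from $a$ to move back up") is not by itself sufficient: what is really needed is $f(s)\le\frac{\beta}{2(1-\alpha)}$, i.e.\ that the minimum value lies weakly below the fixed point, which is precisely the perfect-square computation and does not follow from $f(s)<s$ alone; since you explicitly list $f^2(s)>f(s)$ among the inequalities to be checked algebraically, this is covered by your plan rather than a gap, but the paper's closed-form identity is cleaner. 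Your side remark attributing the non-strict boundary in parts (2)--(3) of the theorem to an endpoint of $t\in(1,4)$ is off target (that boundary is $\lambda=\frac{25\beta}{72(1-\alpha)^2}$, interior to $(1,4)$ in the $t$-scale), but this does not affect the proof of the lemma.
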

\begin{proof}
We need to show three things: (1) $f(a)>a$ and $f(b)\leq b$. (2) $f(p)<p$ for all $p\in [s,b)$. (3) $\restr{f}{E}$ is a map from $E$ to itself. We begin with (2). Let $p\in [s,b)$. Then we have $p-f(p)=-\lambda[\frac{\bar y\beta}{p}-\bar x(1-\alpha)]$. Since $\lambda>0$ and $\frac{\bar y\beta}{p}-\bar x(1-\alpha)\leq \frac{\bar y\beta}{s}-\bar x(1-\alpha)$, it is enough to show that $\frac{\bar y\beta}{s}-\bar x(1-\alpha)<0$. Now $\frac{\bar y\beta}{s}-\bar x(1-\alpha)=\frac{\bar y\beta}{\sqrt{\bar y\lambda\beta}}-\bar x(1-\alpha)$, so $\frac{\bar y\beta}{s}-\bar x(1-\alpha)<0$
is equivalent to $\lambda>\frac{\bar y\beta}{{\bar x}^2(1-\alpha)^2}$, that is certainly true (since this is our assumption). Now we show that the same argument gives $f(b)< b$ in (1). (This strict inequality is stronger than we need here, but we need it to prove part 4 of Theorem~\ref{main}, see the proof of Lemma~\ref{SixthLem} below.) We have $b-f(b) = -\lambda[\frac{\bar y\beta}{f^2(s)+s}-\bar x(1-\alpha)]$. Since $\lambda>0$ and $\frac{\bar y\beta}{f^2(s)+s}-\bar x(1-\alpha)< \frac{\bar y\beta}{s}-\bar x(1-\alpha)$, it is enough to show that $\frac{\bar y\beta}{s}-\bar x(1-\alpha)< 0$. We have already shown that this is true. Next we show that $f(a)>a$. By a direct calculation, we have that $f(a)-a=f(f(s))-f(s)=\lambda[\frac{\bar y\beta}{f(s)}-\bar x(1-\alpha)]>0$ is equivalent to $\left(\sqrt{\lambda}-\frac{\sqrt{\bar y\beta}}{\bar x(1-\alpha)}\right)^2>0$. Now the assumption $\frac{\bar y\beta}{{\bar x}^2(1-\alpha)^2} < \lambda$ forces $f(a)-a>0$. 

To prove (3), we need to show that the maximum value of $\restr{f}{E}$ does not exceed $b=f^2(s)+s$ (that the minimum value of $\restr{f}{E}$, that is $f(s)$, is in $E$ is clear). Since $f$ is unimodal, the maximum of $f$ on $E$ is taken either at $a$ or at $b$. For the first case, we have $f(a)=f(f(s))=f^2(s)<f^2(s)+m=b$. For the second case, we need $f(b)\leq b$, but this is true by part (1) above.   

\end{proof}

We have proved part 1 of Theorem~\ref{main}. We assume $\frac{\bar y\beta}{{\bar x}^2(1-\alpha)^2} < \lambda < \frac{4\bar y\beta}{{\bar x}^2(1-\alpha)^2}$ for the rest of this section (actually for the rest of the paper). Now we are ready to prove parts 2 and 3 of the theorem. In view of Proposition~\ref{ChaosThm} and Lemma~\ref{FirstLem}, we only need to translate two conditions $f^2(m)>m$ and $f^3(m)>\max\{x\in\Pi\}$ (or $f^3(m)\geq \min\{x\in \Pi\}$) in terms of $\bar x$, $\bar y$, $\alpha$, $\beta$, and $\lambda$. First we show that

\begin{lem}\label{ThirdLem}
$f^2(s)>s$ if and only if $\frac{9\bar y\beta}{4{\bar x}^2(1-\alpha)^2}<\lambda$.
\end{lem}
\begin{proof}
Under the condition $\frac{\bar y\beta}{{\bar x}^2(1-\alpha)^2} < \lambda$, a direct computation shows that $f^2(s)>s$ is equivalent to $2{\bar x}^2(1-\alpha)^2\lambda-5\bar x(1-\alpha)\sqrt{\bar y\beta\lambda}+3\beta\bar y>0$. Now the statement follows.  
\end{proof}

Next we show that

\begin{lem}\label{SecondLem}
If $\frac{9\bar y\beta}{4{\bar x}^2(1-\alpha)^2}<\lambda$, then the set $\Pi=\{x\in [a,s]\mid f(x)\in [a,s] \textup{ and } f^2(x)=x\}$ is a singleton, namely $\Pi=\{\frac{\bar y\beta}{\bar x(1-\alpha)}\}$ (the unique fixed point for $\restr{f}{E}$). 
\end{lem}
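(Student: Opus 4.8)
The plan is to show that $\Pi$ contains exactly one point by first identifying the candidate and then ruling out the possibility of a genuine period-two orbit. The candidate is the fixed point of $f$ itself: solving $f(p)=p$ gives $\lambda z(p)=0$, i.e. $2\beta/p - 4(1-\alpha)=0$, so $p^\ast = \frac{\beta}{2(1-\alpha)}$. First I would verify that $p^\ast$ actually lies in $\Pi$, i.e. that $p^\ast \in [a,s]$ and $f(p^\ast)=p^\ast \in [a,s]$ (the condition $f^2(p^\ast)=p^\ast$ being automatic). The inequality $p^\ast \le s = \sqrt{2\lambda\beta}$ is equivalent, after squaring, to $\lambda \ge \frac{\beta}{8(1-\alpha)^2}$, which holds under our hypothesis; the inequality $p^\ast \ge a = f(s)$ should follow from a similarly short algebraic manipulation using $\lambda > \frac{9\beta}{32(1-\alpha)^2}$ (this is presumably exactly where that numerical threshold is needed, and I would expect it to be the binding constraint rather than the weaker $\frac{\beta}{8(1-\alpha)^2}$).

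Next I would show there is no point of $\Pi$ other than $p^\ast$. A point $x\in\Pi$ with $x\neq p^\ast$ would be one element of a genuine $2$-cycle $\{x, f(x)\}$ with both $x$ and $f(x)$ in $[a,s]$. The key structural fact is that on $[a,s]$ the map $f$ is strictly decreasing, so $f^2$ is strictly increasing on the set where the orbit stays in $[a,s]$; an increasing map on an interval has its fixed-point set totally ordered, and more usefully, $g:=\restr{f}{[a,s]}$ being strictly decreasing means $g$ has exactly one fixed point and $g^2$ can have fixed points only at points $x$ with $g(x)$ also a fixed point of $g^2$ — so it suffices to bound the number of solutions of $f^2(x)=x$. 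Concretely, writing out $f^2(x)=x$ clears to a polynomial equation in $x$ (degree at most $3$ or $4$ after multiplying through by powers of $x$ and of $f(x)$), one of whose roots is $p^\ast$; dividing out the factor $(x-p^\ast)$ leaves a lower-degree factor, and I would argue that under $\lambda > \frac{9\beta}{32(1-\alpha)^2}$ that remaining factor has no root in $[a,s]$ — either because it has no positive real root at all, or because its positive roots fall outside $[a,s]$ (the discriminant or a sign check at the endpoints $a$ and $s$ should settle this).

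An alternative, cleaner route avoiding explicit root-counting: since $g=\restr{f}{[a,s]}$ is strictly decreasing and continuous with the unique fixed point $p^\ast$, a standard fact is that $g$ has a (nontrivial) periodic point of period $2$ iff $|g'(p^\ast)|>1$ is \emph{not} sufficient by itself, but rather one checks the graph of $g$ against the diagonal; more simply, if $g\circ g$ had a fixed point $x<p^\ast$ then $g(x)>p^\ast$ and monotonicity of $g^2$ forces $g^2$ to cross the diagonal, which can be tested by comparing $g^2$ to the identity at the endpoint $x=a$ (where $g(a)=f(a)=f^2(s)$ — is this in $[a,s]$? if $f^2(s)>s$ the orbit of $a$ already leaves $[a,s]$, killing the cycle) and at $x=s$. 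In any case I would reduce everything to a finite set of one-variable inequalities in the parameters and invoke the hypothesis $\lambda>\frac{9\beta}{32(1-\alpha)^2}$ to close them.

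The main obstacle I anticipate is the root-counting step: the equation $f^2(x)=x$ is a rational equation that clears to a quartic (or cubic) with coefficients that are messy expressions in $\lambda$, $\beta$, $\alpha$, and after factoring out $x-p^\ast$ one must certify that the residual factor contributes no admissible root. Getting the algebra to collapse — ideally into something recognizably a perfect square or a manifestly sign-definite expression on $[a,s]$ under the stated bound on $\lambda$ — is the delicate part; as the paper notes, this is the kind of computation best confirmed with a computer algebra system, and the write-up should exhibit the crucial factorization and the endpoint sign checks rather than every intermediate line.
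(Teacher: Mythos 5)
Your skeleton (identify the fixed point $p^\ast=\frac{\beta}{2(1-\alpha)}$, check it lies in $\Pi$, then factor $f^2(x)=x$ and dispose of the remaining roots) matches the paper's, but both places where you invoke the hypothesis $\lambda>\frac{9\beta}{32(1-\alpha)^2}$ are misplaced, and one of them is a genuine gap. The minor point first: $p^\ast\geq a$ needs no parameter condition at all, since $a=f(s)$ is the global minimum value of $f$ on $\mathbb{R}_{++}$ and $p^\ast=f(p^\ast)$ lies in the image of $f$; indeed $p^\ast-a=\bigl(\sqrt{\lambda}-\frac{\sqrt{2\beta}}{4(1-\alpha)}\bigr)^2\geq 0$. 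So the threshold is not the binding constraint there, contrary to your guess.

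The serious issue is your concrete exclusion step, ``the residual factor has no root in $[a,s]$'': this is false under the lemma's hypothesis. Solving $f^2(p)=p$ gives, besides $p^\ast$, the genuine $2$-cycle $w_{1,2}=2\lambda(1-\alpha)\mp\sqrt{4\lambda^2(1-\alpha)^2-\beta\lambda}$, which is real whenever $\lambda\geq\frac{\beta}{4(1-\alpha)^2}$, hence in particular under $\lambda>\frac{9\beta}{32(1-\alpha)^2}$. Since a $2$-cycle must straddle the unique fixed point, $w_1<p^\ast<s$, and $w_1=f(w_2)\geq f(s)=a$, so $w_1$ always lies in $[a,s)$; numerically, with the benchmark $\alpha=\frac34$, $\beta=\frac12$, $\lambda=3.61$ one has $[a,s]=[0.19,\,1.9]$ and $w_1\approx 0.60$. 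Hence no discriminant or endpoint sign check can certify ``no root in $[a,s]$'', and your proposed computation would not close. What actually excludes $w_1$ (and $w_2$) from $\Pi$ is the second membership condition $f(x)\in[a,s]$: the hypothesis $\lambda>\frac{9\beta}{32(1-\alpha)^2}$ is exactly equivalent to $w_2>s$, so $w_2\notin[a,s]$ and $f(w_1)=w_2\notin[a,s]$, which is the paper's argument. You do remark at the outset that a nontrivial point of $\Pi$ would need both elements of its $2$-cycle inside $[a,s]$, so the repair is within reach, but the plan as written targets the wrong inequality; the sketched ``cleaner route'' via monotonicity of $f^2$ is too vague to substitute for it.
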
 
\begin{proof}
First, we compute the fixed points of $\restr{f}{E}$. Solving $f(p)=p$ for $p>0$, we obtain $p=\frac{\bar y\beta}{\bar x(1-\alpha)}$. So, $f$ has the unique fixed point, which we name $z$. It is clear that $z<s$ (this follows from $\frac{\bar y\beta}{{\bar x}^2(1-\alpha)^2} < \lambda$) and $a\leq z$ (since $a=f(s)$ is the minimum of $f$). So, we have $z\in \Pi$. Next, we compute the period $2$ points for $f$ on $[a,s]$. Solving $f^2(p)=p$ for $p$ (with Python), we get $p=\frac{\bar y\beta}{\bar x(1-\alpha)}, -\frac{\bar x \alpha \lambda}{2}+\frac{\bar x\lambda}{2}-\frac{1}{2}\sqrt{{\bar x}^2\alpha^2\lambda^2-2{\bar x}^2\alpha\lambda^2-2\bar y\beta\lambda+{\bar x}^2\lambda^2}, -\frac{\bar x \alpha \lambda}{2}+\frac{\bar x\lambda}{2}+\frac{1}{2}\sqrt{{\bar x}^2\alpha^2\lambda^2-2{\bar x}^2\alpha\lambda^2-2\bar y\beta\lambda+{\bar x}^2\lambda^2}$. The first $p$ is $z$ (the fixed point), and the other two points are the period $2$ points. We name the last two points as $w_1$ and $w_2$ respectively ($w_1\leq w_2$). If we show that $s<w_2$, we are done. A direct calculation (with Python) shows that $s<w_2$ is equivalent to $\frac{9\bar y\beta}{4 {\bar x}^2(1-\alpha)^2}<\lambda$ (this is our assumption). 
\end{proof}

Now we assume $\frac{9\bar y \beta}{4{\bar x}^2(1-\alpha)^2}<\lambda$. (So $\Pi$ is a singleton.) Finally we show
\begin{lem}\label{ForthLem}
$f^3(s) > \max\{x\in \Pi\}$ if and only if $\frac{25\bar y\beta}{9{\bar x}^2(1-\alpha)}<\lambda$. 
\end{lem}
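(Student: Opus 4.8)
The plan is to reduce the inequality $f^{3}(s)>\max\{x\in\Pi\}$ to the single inequality $f^{3}(s)>z$, where $z=\frac{\beta}{2(1-\alpha)}$ is the unique fixed point of $f$ identified in the proof of Lemma~\ref{SecondLem}, and then to read off the sign of $f^{3}(s)-z$ from a convenient factorisation. For the reduction, note first that $z\in\Pi$ always holds (as recorded in the proof of Lemma~\ref{SecondLem}), so $\max\{x\in\Pi\}\geq z$, and hence $f^{3}(s)>\max\{x\in\Pi\}$ already forces $f^{3}(s)>z$. Conversely, since $\tfrac{25}{72}>\tfrac{9}{32}$, the inequality $\lambda>\frac{25\beta}{72(1-\alpha)^{2}}$ implies $\lambda>\frac{9\beta}{32(1-\alpha)^{2}}$, so Lemma~\ref{SecondLem} gives $\Pi=\{z\}$ and therefore $\max\{x\in\Pi\}=z$ whenever $\lambda$ exceeds the claimed threshold. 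Combining these observations, it suffices to show that, throughout the standing range $\frac{\beta}{8(1-\alpha)^{2}}<\lambda<\frac{\beta}{2(1-\alpha)^{2}}$, one has $f^{3}(s)>z$ if and only if $\lambda>\frac{25\beta}{72(1-\alpha)^{2}}$. (The orbit $s\mapsto f(s)\mapsto f^{2}(s)\mapsto f^{3}(s)$ stays in $E$ by Lemma~\ref{FirstLem}, so $f^{3}(s)$ is indeed the quantity $(\restr{f}{E})^{3}(m)$ appearing in Proposition~\ref{ChaosThm}, and nothing is lost in the reduction.)

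For the computation I would pass to the variables $c:=4\lambda(1-\alpha)$ and $s=\sqrt{2\lambda\beta}$, using $f(p)=p+\frac{s^{2}}{p}-c$ and $z=\frac{s^{2}}{c}$; note that $\lambda>\frac{\beta}{8(1-\alpha)^{2}}$ is exactly $s<c$, and $\lambda<\frac{\beta}{2(1-\alpha)^{2}}$ is exactly $2s-c>0$, i.e.\ $a=f(s)=2s-c>0$. A short simplification gives $f^{2}(s)=\dfrac{5s^{2}-6sc+2c^{2}}{2s-c}$, and then one checks the identity
\[
f^{3}(s)-z=\frac{(f^{2}(s)-c)\,(f^{2}(s)-z)}{f^{2}(s)},
\]
together with $f^{2}(s)-z=\dfrac{-2(s-c)^{3}}{c(2s-c)}$ and $f^{2}(s)-c=\dfrac{(s-c)(5s-3c)}{2s-c}$. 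Since $s<c$ in the standing range, $f^{2}(s)-z>0$ (in particular $f^{2}(s)>z>0$, so every denominator above is positive), while the sign of $f^{2}(s)-c$ is the sign of $3c-5s$. Feeding this into the displayed identity, the sign of $f^{3}(s)-z$ equals the sign of $3c-5s$; and $3c-5s>0\iff 9c^{2}>25s^{2}\iff 144\lambda(1-\alpha)^{2}>50\beta\iff\lambda>\frac{25\beta}{72(1-\alpha)^{2}}$. This is the equivalence sought, and together with the first paragraph it proves the lemma.

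The mathematical content lies entirely in the factorisation identity and the two auxiliary identities for $f^{2}(s)-z$ and $f^{2}(s)-c$; each is routine to verify once denominators are cleared and $\lambda,\beta$ are eliminated in favour of $s,c$, the one pleasant collapse being that the numerator of $f^{2}(s)-z$ reduces to $-2(s-c)^{3}$. The only genuine obstacle is bookkeeping: $f^{3}(s)$ is a rational function whose numerator has degree six in $\sqrt{\lambda}$ before simplification, so a frontal assault in the variable $\lambda$ is messy, whereas in the variables $s$ and $c$ everything breaks into linear factors governed by the single sign condition $s<c$, and no further case distinction is needed.
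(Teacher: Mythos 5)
Your proof is correct and follows essentially the same route as the paper: your displayed identity is exactly the paper's factorisation $f^3(s)-z=(f^2(s)-z)\bigl(1-\tfrac{2\beta\lambda}{f^2(s)\,z}\bigr)$, since $2\beta\lambda/z=4\lambda(1-\alpha)=c$, and your sign analysis in the variables $s=\sqrt{2\lambda\beta}$, $c=4\lambda(1-\alpha)$ reproduces the paper's conclusion that on the standing range the product is positive precisely when $\lambda>\tfrac{25\beta}{72(1-\alpha)^2}$. Your explicit reduction of $\max\{x\in\Pi\}$ to the fixed point $z$ (using $z\in\Pi$ for one direction and Lemma~\ref{SecondLem} with $\tfrac{25}{72}>\tfrac{9}{32}$ for the other) is a slightly more careful rendering of a step the paper simply asserts.
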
 
\begin{proof}
This calculation is a bit involved, so we give some details. Let $z$ be the fixed point of $f$. Since we know that $\max\{x\in \Pi\}=\{z\}$, we have 
\begin{alignat*}{2}
f^3(s)-\max\{x\in \Pi\} &= f(f^2(s)) - z \\
&= f(f^2(s)) - f(z) \\
&= \left(f^2(s) + \lambda\left[\frac{\bar y\beta}{f^2(s)}-\bar x(1-\alpha)\right]\right) - \left(z + \lambda\left[\frac{\bar y\beta}{z}-\bar x(1-\alpha)\right]\right)\\
&=f^2(s)-z + \lambda\left[\frac{\bar y\beta}{f^2(s)}-\frac{\bar y\beta}{z}\right]\\
&=f^2(s)-z -\bar y\beta\lambda\left[\frac{f^2(s)-z}{zf^2(s)}\right]\\
&=(f^2(s)-z)\left(1-\frac{\bar y\beta\lambda}{f^2(s)z}\right)
\end{alignat*}
We consider the first term of the last expression. A direct calculation shows that $s>z$ if and only if $\lambda>\frac{\bar y\beta}{{\bar x}^2(1-\alpha)^2}$ (which we already assumed). So, we have $f^2(s)-z>f^2(s)-s>0$. (The last inequality followed from Lemma~\ref{ThirdLem}.) Next, a direct calculation shows that 
the second term of the last expression is positive if and only if $3{\bar x}^2(1-\alpha)^2\lambda-8\bar x(1-\alpha)\sqrt{\bar y\beta\lambda}+5\bar y\beta>0$. Now we see that under the condition $\frac{\bar y\beta}{{\bar x}^2(1-\alpha)^2}<\lambda$, this is equivalent to $\frac{25\bar y\beta}{9{\bar x}^2(1-\alpha)^2}<\lambda$.
\end{proof}
It is clear that $\min\{x\in \Pi\}=\{z\}$. So by the same argument, we obtain
\begin{lem}\label{FifthLem}
$f^3(s) \geq \min\{x\in \Pi\}$ if and only if $\frac{25\bar y\beta}{9{\bar x}^2(1-\alpha)}\leq\lambda$. 
\end{lem}
Note that $\frac{9\bar y\beta}{4{\bar x}^2(1-\alpha)^2} \approx \frac{2.25\bar y\beta}{{\bar x}^2(1-\alpha)^2}< \frac{2.78\bar y\beta}{{\bar x}^2(1-\alpha)^2}\approx \frac{25\bar y\beta}{9{\bar x}^2(1-\alpha)^2}$. By  Proposition~\ref{ChaosThm} and Lemmas~\ref{FirstLem}, \ref{ThirdLem}, \ref{SecondLem}, \ref{ForthLem}, \ref{FifthLem}, we have proved parts 2 and 3 of Theorem~\ref{main}. Finally, we are left to show (part 4 of Theorem~\ref{main}):
\begin{lem}\label{SixthLem}
If $\frac{25 \bar y\beta}{9 {\bar x}^2 (1-\alpha)^2} \leq \lambda < \frac{4\bar y\beta}{{\bar x}^2 (1-\alpha)^2}$, then the closed interval $E$ is attracting for $f$, that is, $f^n(p)\in E$ for some $n>0$.
\end{lem} 
\begin{proof}
Since $E=[a,b]$, we need to prove: (1)~if $0<p<a$, then $f^{n_1}(p)\in E$ for some $n_1\in \mathbb{N}$, (2)~if $b<p$, then $f^{n_2}(p)\in E$ for some $n_2\in \mathbb{N}$. Note that if $0<p<a$ (the first case), then $f(p)\geq f(s)=a$ since $f(s)$ is a global minimum for $f$, thus we just need to consider the second case. Let $p>b$. Then we have $f(p)-p=\lambda[\frac{\bar y\beta}{p}-\bar x(1-\alpha)]\leq \lambda[\frac{\bar y\beta}{b}-\bar x(1-\alpha)]=f(b)-b<0$. (The last strict inequality follows from $\frac{\bar y \beta}{{\bar x}^2(1-\alpha)^2}<\lambda$, see the proof of Lemma~\ref{FirstLem}) This shows that if $p>b$, in each iteration of $f$, the value of $p$ drops by $b-f(b)>0$ at least. So, to prove that $p$ is attracted to $E$, it is enough to show that the size of the drop is not too big (thus $p$ does not jump over $E$). Therefore, it is sufficient to have $p-f(p)\leq b-a$. Since this is equivalent to $\lambda[\bar x(1-\alpha)]\leq f^2(s)+s-f(s)$ and $s-f(s)>0$ under our assumption on $\lambda$, it suffices to show that $\lambda[\bar x(1-\alpha)]\leq f^2(s)$. After some computations, we obtain
\begin{equation*}\label{trapEqn}
f^2(s) - \lambda[\bar x(1-\alpha)]  = \frac{\lambda\left[3{\bar x}^2(1-\alpha)^2\lambda-8\bar x\sqrt{\bar y}(1-\alpha)\sqrt{\beta\lambda}+5\bar y \beta\right]}{\bar x(\alpha-1)\lambda + 2\sqrt{\bar y \beta \lambda}}. 
\end{equation*}
We check that in the last expression, the denominator is strictly positive if $0<\lambda<\frac{4\bar y\beta}{{\bar x}^2(1-\alpha)^2}$ (this follows from our assumption). Also, we see that the numerator is positive if $\frac{25\bar y \beta}{9 {\bar x}^2(1-\alpha)^2}\leq \lambda$. 

\end{proof}

\section{Proofs of Propositions~\ref{LiYorke} and~\ref{NoPeriodThree}} 
\begin{proof}[Proof of Proposition~\ref{LiYorke}]
The first part (this paragraph) of the following argument is a replicate of~\cite[Proof of Prop.~9.10]{Bhattacharya-book}. We include this to make the paper self-contained. Let $(\alpha,\beta)=(0.75,0.5)$, $(\bar x, \bar y)=(4,2)$, and $\lambda = 3.61$. Then we have $f(3.75)=1.9$, $f(1.9)=0.19$, $f(0.19)=15.58>3.75$. So, by the Li-Yorke theorem, there exists a period three cycle. Also, by~\cite[Prop.~9.10 and its proof]{Bhattacharya-book}, we know that the choices of $(\alpha, \beta)$ and $\lambda$ are robust (for this fixed $(\bar x, \bar y)$), so it is clear that there exist open sets $A$ and $C$ as in Proposition~\ref{LiYorke}. Thus, the only thing we need to show is that the choice of $(\bar x, \bar y)$ is also robust. This follows from the following numerical/graphical argument.  

In Figure~\ref{fig1}, we plot the graphs of $p_{t+1}=f(p_t)$ (dashed curve) and $p_{t+3}=f^3(p_t)$ (solid curve) together with the $45^\circ$ line. 
%\begin{figure}[h!]
%	\begin{center}
%    	\scalebox{.3}{\input{periodThree.pgf}}
%	\end{center}
%    \caption{$f$ and $f^3$ with $\lambda=3.61$}\label{fig1}
%\end{figure}
We see that the solid curve crosses the $45^\circ$ line at $p^*$ (the big dot) between $p=2$ and $p=3$. A numerical computation gives $p^{*}\cong 2.66$, and it is clear that $p^{*}$ is a point of period three. (From the picture we see that $p^{*}$ is not a fixed point.) Since the solid curve crosses (but not touching) the $45^\circ$ line at $p^{*}$ and $f^3$ is continuous in $\bar x$ and $\bar y$, a small perturbation of $\bar x$ and $\bar y$ does not affect the existence of a period three cycle. So, there exists an open set $B$ as required. (Actually, this graphical argument gives the existence of open sets $A$ and $C$ as well.) 
\end{proof}

\begin{proof}[Proof of Proposition~\ref{NoPeriodThree}]
Let $(\bar x, \bar y)=(4,2)$, $(\alpha,\beta)=(0.75,0.5)$, and $2.77< \lambda\leq 3.00$. Then by Theorem~\ref{main} it is clear that the map $f$ has an odd period cycle. Only thing we need to show now is that $f$ does not have a period three cycle. We use a graphical/numerical argument. In Figures~\ref{fig2},~\ref{fig3}, and~\ref{fig4} we plot the graphs of $f$ (dashed curve) and $f^3$ (solid curve) with the $45^{\circ}$ line. A general pattern is that if we increase $\lambda$, the graph of $f$ gets slightly deeper down, and $f^3$ becomes more "wavy". (Compare Figure~\ref{fig1} ($\lambda=3.61$) to Figure~\ref{fig2} ($\lambda=2.78$) for example.) For Figure~\ref{fig2} ($\lambda=2.78$), using a numerical computation, we have checked that the solid curve does not touch/cross the $45^{\circ}$ line except at $p=1$ (the fixed point of $f$). Thus, there is no period three cycle in this case. Likewise, for Figure~\ref{fig3} ($\lambda=3$), there is no period three cycle (although the solid curve seems touching the $45^\circ$ line but a numerical computation shows that it is not touching). However, if $\lambda=3.01$, a numerical computation shows that there exists a period three cycle. In Figure~\ref{fig4}, we plotted the $\lambda=3.05$ case since the $\lambda=3.01$ case is indistinguishable from the $\lambda=3$ case in picture. It is clear that the solid curve crosses the $45^\circ$ line near $p=4$ (that is not a fixed point of $f$).    
%\begin{figure}[h!]
%	\begin{center}
%    	\scalebox{.3}{\input{delicateProp278.pgf}}
%	\end{center}
%    \caption{$f$ and $f^3$ with $\lambda=2.78$}\label{fig2}
%\end{figure}

%\begin{figure}[h!]
%	\begin{center}
%    	\scalebox{.3}{\input{delicateProp3.pgf}}
%	\end{center}
%    \caption{$f$ and $f^3$ with $\lambda=3$}\label{fig3}
%\end{figure}

%\begin{figure}[h!]
%	\begin{center}
%    	\scalebox{.3}{\input{delicateProp305.pgf}}
%	\end{center}
%    \caption{$f$ and $f^3$ with $\lambda=3.05$}\label{fig4}
%\end{figure}

\begin{figure}[h!]
	\begin{center}
	\includegraphics[scale=0.45]{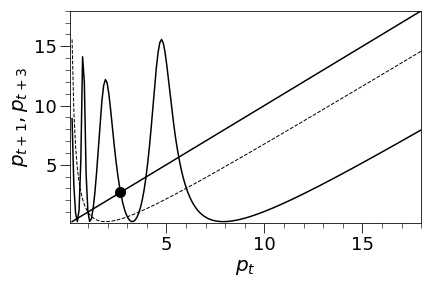} 
	\end{center}      
        \caption{$\lambda=3.61$}
        \label{fig1}
\end{figure}

\begin{figure}[h!]
	\begin{center}
	\includegraphics[scale=0.45]{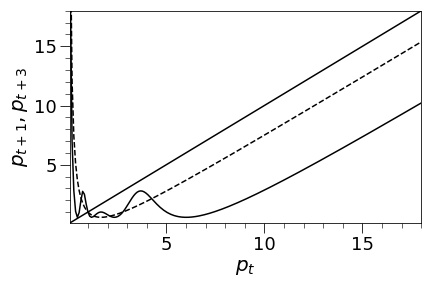}
	\end{center}
        \caption{$\lambda=2.78$}
        \label{fig2}
\end{figure}

\begin{figure}[h!]
	\begin{center}
	\includegraphics[scale=0.45]{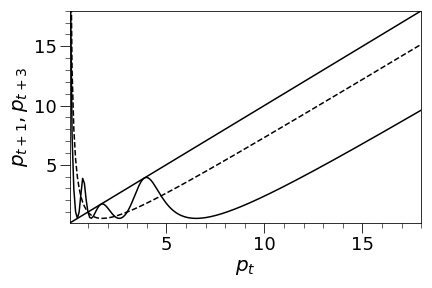}
	\end{center}
        \caption{$\lambda=3.0$}
        \label{fig3}
\end{figure}

\begin{figure}[h!]
	\begin{center}
	\includegraphics[scale=0.45]{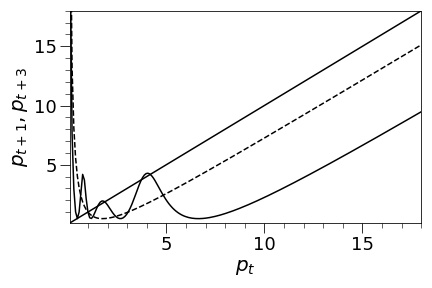}
	\end{center}
        \caption{$\lambda=3.05$}
        \label{fig4}
\end{figure}
%    \end{minipage}
%    \end{figure}
\end{proof}

\section{Ergodic properties: an experimental approach}
Our (numerical/theoretical) argument in this and the next sections use ergodic theory. Here, we give a quick review of ergodic theory. If the reader is familiar with ergodic theory, skip Subsection~\ref{background}. Our basic references for ergodic theory are classical~\cite{ColletEckmann-dynamics-book},~\cite{Day-dynamics-book}, and~\cite{MeloStrien-dynamics-book}. Note that our strategy (philosophy) in this and the next sections stems from~\cite{Lyubich-forty-Modern} and~\cite{Shen:2014} (these are quite readable expository articles on recent developments of unimodal dynamics). We stress that a deep result by Avila et.al.~(Proposition~\ref{Avila}) theoretically supports our argument.  

\subsection{Background from ergodic theory}\label{background}
Let $I$ be a compact interval on $\mathbb R$.  Let $\mathcal{B}$ denote the Borel $\sigma$-algebra of $I$. Let $\zeta: \mathcal{B}\rightarrow [0,\infty]$ be a measure on $I$. A measurable map $g: I\rightarrow I$ is called \emph{ergodic with respect to $\zeta$} if whenever $g^{-1}(A)=A$ for $A\in \mathcal{B}$, then either $\zeta(A)=0$ or $\zeta(I\backslash A)=0$. We say that a measure $\zeta$ is \emph{g-invariant} if $\zeta(g^{-1}(A))=\zeta(A)$ for any $A\in \mathcal{B}$. We write $\mu$ for the Lebesgue measure on $I$. A measure $\zeta$ is called \emph{absolutely continuous with respect to $\mu$} if whenever $\mu(A)=0$ for $A\in \mathcal{B}$ then $\zeta(A)=0$. We write an "acim" for a measure that is absolutely continuous with respect to $\mu$ and $g$-invariant (if $g$ is clear from the context, we just say "invariant"). Remember that if $\zeta$ is an acim then there exists a ($\mu$-integrable) density function (Radon-Nikodym derivative) $\xi: I \rightarrow \mathbb{R}$ with $d\zeta = \xi d\mu$, see~\cite[\Rmnum{2}.8]{ColletEckmann-dynamics-book}. It is well-known that if $g$ is ergodic with respect to an acim $\zeta$, then we can give an estimate of $\xi$ (hence an estimate of $\zeta$) using some iterates of $g$, see~\cite[8.5.2 and 8.5.3]{Day-dynamics-book} (we use this argument below in Theorem~\ref{ergThm1}).  

Here, we recall (a special case of) the famous Birkhoff's ergodic theorem~\cite[Thm.~8.2]{Day-dynamics-book}, which states that under certain conditions the time average of $g$ is equal to its space average:
\begin{prop}\label{birkhoff}
If $g$ has an absolutely continuous invariant measure $\zeta$, $g$ is ergodic with respect to $\zeta$, and $g$ is $\zeta$-integrable, then
\begin{equation}\label{ergodicSumEqn}
\lim_{n\rightarrow \infty}\frac{1}{n}\sum_{k=0}^{n-1}g^{k}(x)=\int_I g\; d\zeta \text{ for $\zeta$-almost all $x$.}
\end{equation}
%Moreover, under the same conditions, there exists a density function (Radon-Nikodym derivative) $\xi: I \rightarrow \mathbb{R}$ with $d\zeta = \xi d\mu$.  
\end{prop} 
 
In particular, Proposition~\ref{birkhoff} says that (if the conditions are met) the time average of $g$ converges to a constant (for almost all $x$), in other words, we can "predict" the future on average. We call the integral on the right-hand side (or sometimes the sum inside the limit on the left-hand side) of Equation (\ref{ergodicSumEqn}), that is $\int_I g\; d\zeta$ (or sometimes $\frac{1}{n}\sum_{k=0}^{n-1}g^{k}(x)$), the \emph{ergodic sum} of $g$ (which we are referring to would be clear from the context). In the following, we try to apply Proposition~\ref{birkhoff} to our price dynamics defined by $\restr{f}{E}$ (with the condition $\frac{\bar y\beta}{{\bar x}^2(1-\alpha)^2} < \lambda < \frac{4\bar y\beta}{{\bar x}^2(1-\alpha)^2}$ to force that $p>0$ and $\restr{f}{E}\in\mathfrak{G}$). In the following, we write $f$ for $\restr{f}{E}$ to ease the notation. So we need to show that $f$ has an absolutely continuous invariant measure $\zeta$, $f$ is ergodic with respect to $\zeta$, and $f$ is $\zeta$-integrable. (The last condition is clear since $f$ is continuous and bounded on the compact interval $E$.)

\subsection{$S$-unimodal maps}
In general, it is pretty difficult to prove the existence of an acim for a measurable transformation $g$ except for some special cases such as when $g$ is "expansive" or "iteratively expansive" as studied in a classical paper of Lasota and Yorke~\cite{LasotaYorke-expansive-Trans}. Recall that $g$ is called \emph{expansive} if $g$ is piecewise $C^2$ and $|g'(x)|>1$ for $\mu$-almost all $x$. A typical example of an expansive map is a well-studied "tent map"~\cite[8.5.4 and 8.5.5]{Day-dynamics-book}. Further recall that a slightly more general "iteratively expansive" map, that is a piecewise $C^2$ map with $|g'(x)|^n>1$ for some positive integer $n>1$ for $\mu$-almost all $x$. From~\cite{LasotaYorke-expansive-Trans}, we know that a density function $\xi$ with $d\zeta=\xi d\mu$ is a fixed point of \emph{the Perron-Frobenius operator} $P$ from the set of measurable function on $I$ to itself, and that a uniform expansion of a map $g$ helps a lot to make $P$ a "nice" operator. As a result, it is not too hard to prove the existence of an acim in these (iteratively) expansive cases. See~\cite[Chap.~5]{MeloStrien-dynamics-book},~\cite[Sec.~4]{Shen:2014}, and~\cite{Lyubich-forty-Modern} for an overview of this problem and also see~\cite{SatoYano-ergo-AIP} and~\cite{SatoYano-chaos-IJET} for applications of an acim for an iteratively expansive map in economics. 

In this paper, our function $f$ is not (even iteratively) expansive since it has a critical point $s$. (So this is a hard case.) Thus, to establish the existence of an acim, we need some deep analytical results to investigate the counter play between the contraction of $f$ near the critical point $s$ and the expansion of $f$ at $f(s)$ (that is far from $s$). Now, we restrict the class of functions we consider to so-called $S$-unimodal maps (due to Singer~\cite{Singer-unimodal-SIAM}), whose ergodic properties are well studied, see~\cite[Part \Rmnum{2}]{ColletEckmann-dynamics-book},~\cite[Chap.~5]{MeloStrien-dynamics-book},~\cite{Avila-unimodal-Annals},~\cite{Avila-stochastic-Invent} for example. (Our function $f$ is actually $S$-unimodal as we will show below.) Let $g$ be a measurable transformation defined on a compact interval $I=[a,b]$ of $\mathbb{R}$.
\begin{defn}
A function $g$ is called \emph{$S$-unimodal} if the following conditions are satisfied:
\begin{enumerate}
\item{$g$ is $C^3$.}
\item{$g$ is unimodal with the unique critical point $x=c$ in $(a,b)$ and $g'(x)\neq 0$ except when $x=c$.}
\item{The Schwarzian derivative of $g$, that is, $Sg(x)=\frac{g'''(x)}{g'(x)}-\frac{3}{2}\left(\frac{g''(x)}{g'(x)}\right)^2$ is negative except at $x=c$.}
\end{enumerate}  
Moreover the critical point $x=c$ is called \emph{non-flat} and of order $l$ if there are positive constants $O_1$, $O_2$ with 
\begin{equation*}
O_1|x-c|^{l-1} \leq |g'(x)| \leq O_2|x-c|^{l-1}.
\end{equation*}
\end{defn}

Note that a well-known "logistic map" ($g(x)=r x(1-x)$ for $r\in (0,4]$) is $S$-unimodal. The first key result in this section is
\begin{prop}\label{ergodProp}
If $g$ is $S$-unimodal with a non-flat critical point and without an attracting periodic orbit, then $g$ is ergodic with respect to any absolutely continuous measure $\zeta$. 
\end{prop}
\begin{proof}
Let $g$ be $S$-unimodal with a non-flat critical point and without an attracting periodic orbit. Suppose that $g^{-1}(A)=A$ for some $A\in\mathcal{B}$. Then we have $\mu(g^{-1}(A))=\mu(A)$. From~\cite[Thm.~1.2]{MeloStrien-dynamics-book} we know that $g$ is ergodic with respect to $\mu$, so we obtain that $\mu(A)=0$ or $\mu(I\backslash A)=0$. This yields $\zeta(A)=0$ or $\zeta(I\backslash A)=0$ since $\zeta$ is absolutely continuous.  
\end{proof}

To end this subsection, we prove that
\begin{lem}\label{S-unimodalLem}
The function $f$ (restricted to $E=[f(\sqrt{\bar y\lambda\beta}),f^2(\sqrt{\bar y\lambda\beta})+\sqrt{\bar y\lambda\beta}]=[a,b]$) is $S$-unimodal and the unique critical point $c$ of $f$ is non-flat and of order $2$.
\end{lem}
\begin{proof}
First, we show that $f(p)=p+\lambda[\bar y \beta/p - \bar x(1-\alpha)]$ is $C^3$. We have $f'(p)=1-\lambda\bar y \beta/{p^2}$, $f''(p)=2\lambda\bar y \beta/{p^3}$, and $f'''(p)=-6\lambda\bar y \beta/{p^4}$. Since $p$ is positive (so it is not zero), it is clear that $f$ is $C^3$. Second, from Section~2, we know that $f$ is unimodal and has a unique critical point at $p=\sqrt{\lambda\bar y \beta}=s$ in $(a,b)$. It is easy to see that $f'(p)\neq 0$ except when $p=s$. Third, we have $Sf(p)=-6\lambda\bar y \beta/(p^2-\lambda\bar y\beta)^2<0$ except when $p=\sqrt{\lambda\bar y\beta}=s$. Finally, we show that the critical point $p=s$ is non-flat of order $l=2$. It is clear that $|f'(p)|$ is strictly concave and monotone increasing on the compact interval $[s,b]$. Also note that the righthand derivative of $|f'(p)|$ at $s$ is $2/\sqrt{\lambda\bar y\beta}$. So $\restr{f'(p)}{[s,b]}$ is bounded below by $f'(b)/(b-s)|p-s|$ and is bounded above by $2/\sqrt{\lambda\bar y\beta}|p-s|$. Likewise, we have that $|f'(p)|$ is strictly convex and monotone decreasing on $[a,s]$. Also, the lefthand derivative of $|f'(p)|$ is $-2/\sqrt{\lambda\bar y\beta}$. Therefore, $\restr{f'(p)}{[a,s]}$ is bounded below by $2/\sqrt{\lambda\bar y\beta}|p-s|$ and is bounded above by $f'(a)(s-a)|p-s|$. Thus we see that $f'(p)$ (on $E$) is bounded below by $\min\{f'(b)/(b-s), 2/\sqrt{\lambda\bar y\beta}\}|p-s|$ and is bounded above by $\max\{2/\sqrt{\lambda\bar y\beta}, f'(a)(s-a)\}|p-s|$.
\end{proof}

\subsection{Our strategy and the existence of an acim}
The second key result in this section is

\begin{prop}\cite[Thm.~\Rmnum{2}.4.1]{ColletEckmann-dynamics-book}\label{criticalOrbit}
If $g$ is $S$-unimodal, then every stable periodic orbit attracts at least one of $a$, $b$, or $c$ (i.e.~the endpoints of $I$ or the critical point of $g$). 
\end{prop}
Proposition~\ref{criticalOrbit} means that all "visible" orbits (in numerical experiments) are orbits containing $a$, $b$, or $c$ only (in the long run). We consider that only these visible orbits are meaningful in economics (or in real life) since it is widely believed that every economic modelling is some sort of an approximation of real economic activities and contains inevitable errors. We know that there are totally different point of view for economic modellings, but we do not argue here. Here is the third key result for this section: 
\begin{prop}\cite[Cor.~\Rmnum{2}.4.2]{ColletEckmann-dynamics-book}\label{noStableOrbit}
If $g$ is $S$-unimodal, then $g$ has at most one stable periodic orbit, plus possibly a stable fixed point. If the critical point $c$ is not attracted to a stable periodic orbit, then $g$ has no stable periodic orbit.   
\end{prop}

In this paper, we interpret our numerical calculations based on  Propositions~\ref{criticalOrbit} and~\ref{noStableOrbit}. In particular, we look at the orbit starting from the critical point $s$, that is $\{s, f(s), f^2(s),\cdots\}$ (we call this orbit the "critical orbit"). If the critical orbit seems to eventually converge to a periodic orbit, we conclude that we can see the future: the average price in the long run will be the average price in this attracting periodic orbit. Note that in this case, $f$ is neither ergodic nor has an acim since most $f^n(s)$ accumulate around this attracting periodic orbit, but we do not care (since we can still predict the future). Otherwise, we compute (or give an estimate for) the following \emph{Lyapunov exponent} at the critical point $p=s$ since the existence of a positive Lyapunov exponent at the critical point implies that the critical orbit is repelling and also is a strong indication for the existence of a chaos (hence the existence of an acim, see (CE1) in Proposition~\ref{sufficientACIM} below):

\begin{defn}
$\lim_{n\rightarrow \infty}\frac{1}{n}\sum_{i=1}^{n}\ln{|Dg^{n}(g(c))|}$ is called the \emph{Lyapunov exponent} of $g$ at $c$ (if the limit exists).
\end{defn}

If the Lyapunov exponent (at $c$) is positive, we test one of the following well-known sufficient conditions (within some numerical bound) to confirm the existence of an acim. 

\begin{prop}\label{sufficientACIM}
Suppose that $g$ is $S$-unimodal, $g$ has no attracting periodic orbit, and the critical point $c$ is non-flat. Then $g$ has a unique acim $\zeta$ and $g$ is ergodic with respect to $\zeta$ if one of the following conditions is satisfied:
\begin{enumerate}
\item{Collet-Eckmann conditions (together with some regularity conditions)~\cite[Sec.~1]{ColletEckmann-Liapunov-ETDS}:
\begin{alignat*}{2}
&\textup{(CE1)}\;\;\liminf_{n\rightarrow\infty}\frac{1}{n}\ln\left| Dg^n(g(c)) \right| > 0, \\
&\textup{(CE2)}\;\;\liminf_{n\rightarrow\infty}\frac{1}{n}\inf_{p\in g^{-n}(c)}\ln\left| Dg^n(p)) \right| > 0. 
\end{alignat*} 
} 
\item{Misiurewicz condition~\cite[Thms.~6.2 and~6.3]{Misiurewicz-acim-IHES}: The $\omega$-limit set of $c$ does not contain $c$, that is, 
\begin{equation*}
(MC)\;\; c\notin\bigcap_{n\geq 0}\overline{\{g^i(c): i\geq n\}}. 
\end{equation*}
}
\item{Nowicki-Van Strien summation condition~\cite[Main Thm.]{Strien-SC-Invent}: if $l$ is the order of $c$, then
\begin{alignat*}{2}
&(SC)\;\; \sum_{n=1}^{\infty}|Dg^n(g(c))|^{-1/l} < \infty.
\end{alignat*}
}
\end{enumerate}
\end{prop}
If one of the conditions in Proposition~\ref{sufficientACIM} is satisfied (within some numerical limitation), we conclude that we can predict the future by Proposition~\ref{birkhoff}. We must admit that our argument in this section is not rigorous (we hope to make it rigorous in the future), but we believe that we have provided enough (numerical/theoretical) evidence to support it. We stress that it is very hard to prove the existence of an acim for any non-expansive function $g$ (even for an $S$-unimodal $g$) by a rigorous analytic argument. There are only a few known examples of such, see a famous $g(x)=4x(1-x)$ example due to Ulam and Neumann~\cite{Ulam-ACIM-BullAMS}, also see~\cite[Sec.~7 Examples]{Misiurewicz-acim-IHES} for more examples. 

In this paper, we test Condition 3 (SC) in Proposition~\ref{sufficientACIM} since it is easy to compute (numerically) and covers the most general class of functions, see~\cite[4.2]{Shen:2014} for a comparison of these three sufficient conditions for the existence of an acim, also see~\cite[Chap.~\Rmnum{5}, Sec.~4]{MeloStrien-dynamics-book} for more on (SC). We found that (CE2) was hard to compute since the set $g^{-n}(c)$ can be very large for a large $n$. Also, the $\omega$-limit set of $c$ was difficult to compute for us although (MC) is theoretically beautiful. (For a numerical computation, it is not clear where to set the numerical bound to estimate the $\omega$-limit set.)

\begin{rem}
Roughly speaking, all three condtions (CE1), (MC), and (SC) are basically testing the same thing: they (more or less) guarantee that the critical orbit does not accumulate around the critical point $c$. (So, on the critical orbit, the expansion far from the critical point wins against the contraction near the critical point.)
\end{rem}

\subsection{$\lambda=3.61$ case}
For the rest of the paper, to simplify the exposition and to obtain sharp numerical results, we fix $(\alpha,\beta)=(0.75,0.5)$, $(\bar x, \bar y)=(4,2)$. Also, in this subsection, we fix $\lambda=3.61$ (in the next subsection, we let $\lambda$ vary). Then we have $E=[0.19, 17.48]$. In this case, the dynamics exhibit a Li-Yorke chaos as shown in the previous sections. It is easy to see that a similar analysis (as follows) can be done for any $\alpha, \beta, \bar x, \bar y, \lambda$. 

Here is our main result in this section. (We can predict the future even if $f$ is chaotic.)

\begin{thm}\label{ergThm1-1}
There exists an acim $\zeta$ (whose estimate is as in Figure~\ref{fig8}) for $f$ on $E$. Moreover, we have $\lim_{n\rightarrow \infty}\frac{1}{n}\sum_{k=0}^{n-1}f^{k}(p)\approx 4.5$ for $\zeta$-almost all $p\in E$.
\end{thm} 
A few comments are in order. Although we give an estimate of $\zeta$ in Figure~\ref{fig8}, it is hard to give an explicit formula for $\zeta$ (thus it is hard to express $\zeta$ in a concrete way). Also, looking at Figure~\ref{fig8} (and its numerical data), we see that the density of $p$ is positive in $[0.19, 15.58]=[f(s), f^2(s)]$ and is zero in $[15.58, 17.48]=[f^2(s),f^2(s)+s]$. Thus, without a loss, if one wishes, Theorem~\ref{ergThm1-1} can be stated (in a more readable form) as follows: 

\begin{thm}\label{ergThm1}
$\lim_{n\rightarrow \infty}\frac{1}{n}\sum_{k=0}^{n-1}f^{k}(p)\approx 4.5$ for $\mu$-almost all $p\in \tilde{E}:=[f(s),f^2(s)]=[0.19, 15.58]$.  
\end{thm}

Now we start looking at the model closely. First, following our strategy as in the last subsection, we consider the critical orbit of $f$. Note that the critical point of $f$ is $s=\sqrt{\bar y \lambda \beta}=1.9$. Using the first $100$ iterates of $f^n(s)$, we obtain Figure~\ref{fig5} that shows a chaotic behaviour of the iterates of $f$. 

\begin{figure}[h!]
	\begin{center}
    	\includegraphics[scale=0.6]{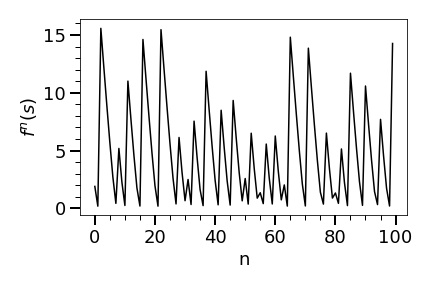}
	\end{center}
    \caption{The first $100$ iterates of $f^n(s)$ look chaotic}\label{fig5}
\end{figure}

In particular, it seems that $f$ has no attracting periodic orbit. To convince the reader that this really the case, we give an estimate for the Lyapunov exponent (using the first 10000 terms of $f^n(s)$). Figure~\ref{fig6} shows that the first 1000 terms are enough to estimate the Lyapunov exponent (but we used the first 10000 terms to be safe). We obtain
\begin{lem}
$\lim_{n\rightarrow \infty}\frac{1}{n}\ln{|Df^{n}(f(s))|}\approx \frac{1}{10000}\ln{|Df^{10000}(f(s))|}\approx 0.5882>0$.
\end{lem}  
\begin{rem}
Let $c_0=s, c_1=f(c_0), c_2=f(c_1), \cdots, c_{n}=f(c_{n-1})$. Note that to compute the Lyapunov exponent we have used the chain rule, that is, $Df^n(f(s))=f'(c_1)\times f'(c_2) \times f'(c_3) \times \cdots \times f'(c_{n})$ (so it is not hard to compute). 
\end{rem}

\begin{figure}[h!]
	\begin{center}
    	\includegraphics[scale=0.6]{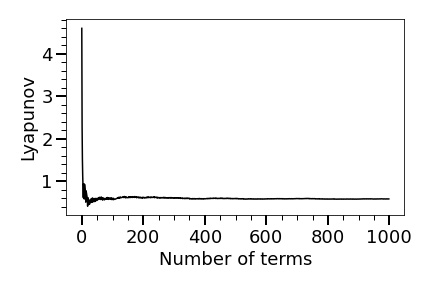}
	\end{center}
    \caption{Convergence of the Lyapunov exponent}\label{fig6}
\end{figure}

Now we check the Nowicki-Van Strien summation condition (SC). Figure~\ref{fig7} shows that the sum in (SC) stabilises if we use the first 100 terms. To be safe, we use 1000 terms to estimate the infinite sum in (SC). We obtain 

\begin{figure}[h!]
	\begin{center}
    	\includegraphics[scale=0.6]{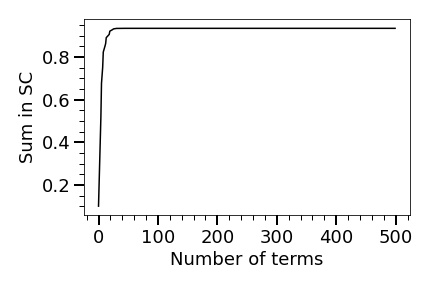}
	\end{center}
    \caption{Convergence of the sum in SC}\label{fig7}
\end{figure}

\begin{lem}\label{SCLem}
$\sum_{n=1}^{\infty}|Df^n(f(s))|^{-1/l}\approx \sum_{n=1}^{1000}|Df^n(f(s))|^{-1/l}\approx 0.935065399839560$.
\end{lem}
\begin{rem}
Since Lemma~\ref{SCLem} is crucial for our argument, we have double-checked (SC) with 100000 terms obtaining  $\sum_{n=1}^{100000}|Df^n(f(s))|^{-1/l}\approx 0.935065399839560$. (This is the same number as in Lemma~\ref{SCLem}!.)
\end{rem}
Now we conclude that $f$ has a unique acim $\zeta$ and $f$ is ergodic with respect to $\zeta$ by Proposition~\ref{sufficientACIM}. Next, in Figure~\ref{fig8} using $f^n(c)$ with $n$ from $1000$ to $10000$ (after removing the effect of a transient period) we obatin an estimate of the density function (Radon-Nikodym derivative) $\xi: E \rightarrow \mathbb{R}$ with $d\zeta = \xi d\mu$. 

\begin{figure}[h!]
	\begin{center}
    	\includegraphics[scale=0.6]{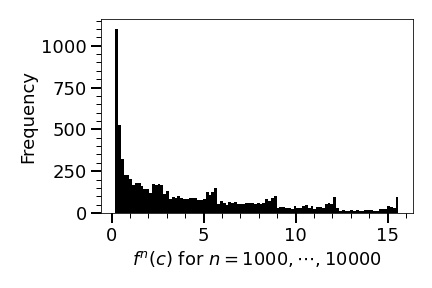}
	\end{center}
    \caption{Estimate of the density function}\label{fig8}
\end{figure}

Next, we directly compute the ergodic sum of $f^n(p)$ using the initial $p=s$ and the first $100000$ terms of $f^n(p)$. We get
\begin{lem}
 $\lim_{n\rightarrow \infty}\frac{1}{n}\sum_{k=0}^{n-1}f^{k}(s)\approx \frac{1}{100000}\sum_{k=0}^{99999}f^{k}(s)\approx 4.483627795147089$. 
 \end{lem}
Note that Figure~\ref{fig9} shows that the ergodic sum converges if we use more than $2000$ terms to estimate it (we use $100000$ terms to be safe).  

\begin{figure}[h!]
	\begin{center}
    	\includegraphics[scale=0.6]{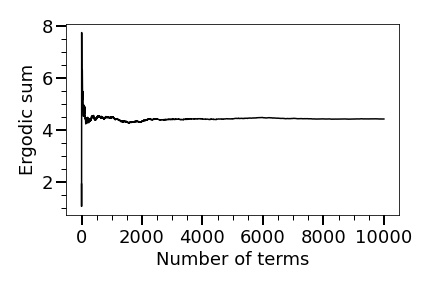}
	\end{center}
    \caption{Convergence of the ergodic sum of $f^n(c)$}\label{fig9}
\end{figure}

To end this section, we compute the ergodic sums using various initial values. Figure~\ref{fig10} shows the result where the initial $p$ is taken from $0.19, 0.20, 0.21, \cdots, 17.48$ (excluding $p=1$ (the unique fixed point of $f$). By Figure~\ref{fig10} (where each ergodic sum is estimated using the first $10000$ terms of $f^n(p)$), we conclude that Theorems~\ref{ergThm1-1} and~\ref{ergThm1} hold and the ergodic sums of $f^n(p)$ converge to somewhere around $4.5$ for $\zeta$ (or $\mu$) almost all $p\in E$ (or $p\in \tilde{E}$). 

\begin{figure}[h!]
	\begin{center}
    	\includegraphics[scale=0.6]{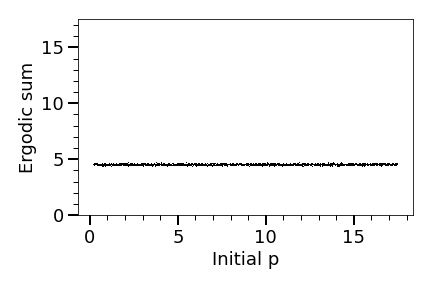}
	\end{center}
    \caption{Ergodic sums of $f^n(p)$ using various initial $p$}\label{fig10}
\end{figure}

%\vspace{10cm}

%\begin{figure}[h!]
%	\begin{center}
%    	\scalebox{.6}{\input{ergodicAverage.pgf}}
%	\end{center}
%    \caption{Ergodic average for each $\lambda$}\label{fig5}
%\end{figure}

%As before let $E$ be the closed interval $[f(s),f^2(s)+s]$ ($s$ is the critical point of $f$). Let $\mathcal{B}$ denote the Borel $\sigma$-algebra of $E$. We say that the map $f$ is ergodic 

%In particular, by some combinations of numerical calculations and theoretical argument, we give a strong evidence to believe the following: 
%\begin{con}
%The \emph{ergodic average} of our price dynamics is as in the Figure~\ref{}. Thus we can predict the future on average (for almost all initial $p$). 
%\end{con}  

\section{Ergodic properties: a sensitivity analysis}
\subsection{Chaos is not too bad}
In this final section, we still keep $(\alpha,\beta)=(0.75,0.5)$, $(\bar x, \bar y)=(4,2)$, but we let $\lambda$ vary. Recall that, from Subsection~2.3, we know that there exists an odd period (hence a Li-Yorke chaos) for $2.77<\lambda<4$. We conduct a sensitivity analysis, that is, how the ergodic sums vary when $\lambda$ changes with $1<\lambda<4$. We know that the unique fixed point of $f$ is $p=\frac{\bar y \beta}{\bar x (1-\alpha)}=1$ (independent of $\lambda$), and the (unique) critical point of $f$ is $s=\sqrt{\bar y \lambda \beta}=\sqrt{\lambda}$ (dependent of $\lambda$). 

Now, using the same strategy as in the last section, we investigate how the critical orbit $\{s, f(s), f^2(s),\cdots\}$ behaves for each $\lambda$. Figure~\ref{fig11} is the bifurcation diagram for $f$. We (roughly) see that: (1)~for $\lambda<2$, $f^n(s)$ converges to the unique attracting fixed point (that is $p=1$), (2)~for $2<\lambda<2.5$, $f^n(s)$ converges to a period-$2$ orbit, (3)~for $2.5<\lambda<2.7$ (roughly), period doubling bifurcations occur and $f^n(s)$ converges to a period-$4$ (8, 16, and so on) orbit, (4)~for $2.7<\lambda<3$  (possibly starting at around $2.77$ based on results in previous sections), we see a chaos except a few "windows", (5)~at $\lambda=3$ (or $3.01$), we see a period three orbit, (6)~for $3<\lambda<3.1$ (roughly), again, we see period doubling bifurcations and $f^n(s)$ converges to an orbit of period $3\cdot 2, 3\cdot 2^2, 3\cdot 2^3$ and so on, (7)~for $3.1<\lambda$, we see a chaos again except a few windows. 

\begin{rem}
Actually, we have obtained the bifurcation diagram of $f$ for $1<\lambda<4$, but In Figure~\ref{fig11}, we have chopped the diagram at $\lambda=3.5$. The reason is that for $3.5<\lambda<4$, the maximum value of $f^n(p)$ grows exponentially, so the diagram gets very skewed (and the part of the diagram shown in Figure~\ref{fig11} becomes almost invisible). For $3.5<\lambda<4$ we have obtained a chaotic region (with some windows). 
\end{rem}

\begin{figure}[h!]
	\begin{center}
    	\includegraphics[scale=0.7]{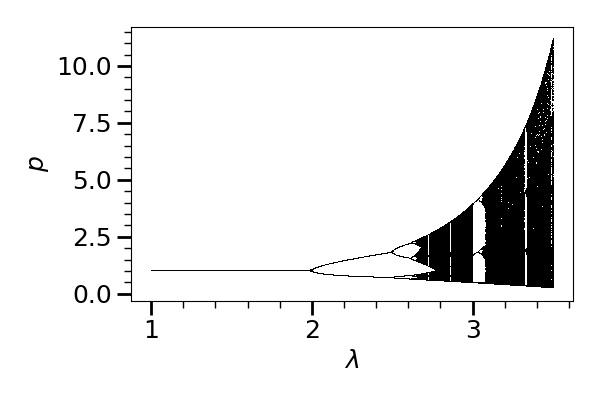}
	\end{center}
    \caption{The bifurcation diagram of $f$}\label{fig11}
\end{figure}

Next, we compute Lyapunov exponents of $f$ at $s$ for various $\lambda$ and obtain Figure~\ref{fig12}. We see that roughly speaking the Lyapunov exponent is negative for $1<\lambda<2.7$ (thick dots), and positive for $2.7<\lambda<4$ (thin dots) except a few "windows" corresponding to the windows in the bifurcation diagram (Figure~\ref{fig11}).

\begin{figure}[h!]
	\begin{center}
    	\includegraphics[scale=0.6]{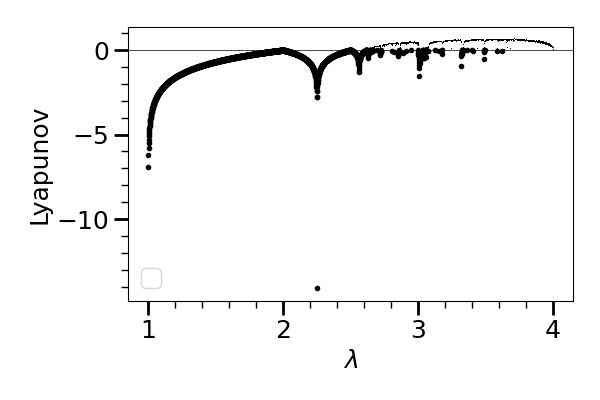}
	\end{center}
    \caption{Lyapunov exponents of $f$}\label{fig12}
\end{figure}

Now, we test (SC) for $2.7<\lambda<4$ (we do not care for $\lambda<2.7$ since it is a non-chaotic region and easy to predict the future). We obtain Figure~\ref{fig13}.

\begin{figure}[h!]
	\begin{center}
    	\includegraphics[scale=0.6]{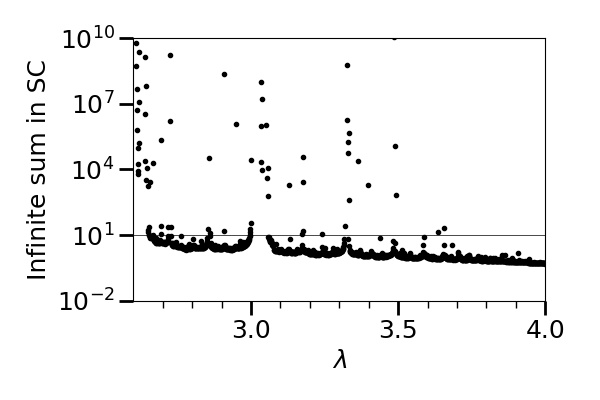}
	\end{center}
    \caption{Infinite sums in SC}\label{fig13}
\end{figure}

Our argument is based on a numerical computation using $1000$ terms to estimate the infinite sum in (SC), so we need to decide when we conclude that the infinite sum is finite. We draw the (ad hoc) line at $10$ (that is the horizontal line in Figure~\ref{fig13}). Basically, we only avoid $\lambda$ where (the estimate of) the infinite sum grows exponentially. We hope to rigorously prove that the infinite sums here are finite in the future work. 

Here are main results in this section.
\begin{thm}\label{goodLThm}
For $2.75<\lambda<4$ except $\lambda$ values corresponding to the few windows in Figure~\ref{fig11} (and possibly except some $\lambda$ values whose total Lebesgue measure is $0$, see Proposition~\ref{Avila} below), there exists a unique acim for $f$. Moreover for these $\lambda$ values, the ergodic sums of $f$ are as in Figure~\ref{fig14}.  
\end{thm}

For $\lambda$ values as in Theorem~\ref{goodLThm} (satisfying the SC), we obtain a pretty smooth relation between $\lambda$ and the ergodic sums of $f$ as in Figure~\ref{fig14} (using $5000$ terms to estimate the ergodic sums). Extending Theorem~\ref{goodLThm} (and Figure~\ref{fig14}) using the naive estimates of the ergodic sum (that is $\sum_{k=0}^{9999}f^k(s)$) for $\lambda$ that does not satisfy (SC), we obtain Theorem~\ref{finalThm} (and Figure~\ref{fig16}).

\begin{figure}[h!]
	\begin{center}
    	\includegraphics[scale=0.6]{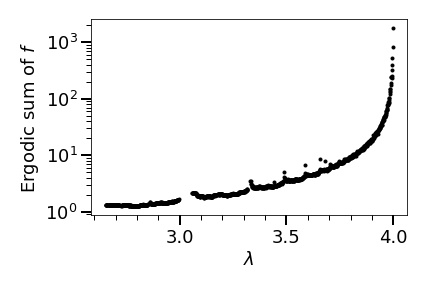}
	\end{center}
    \caption{Ergodic sums of $f$}\label{fig14}
\end{figure}

\subsection{Comments and interpretations on Theorem~\ref{finalThm}}

Here, we add a few comments and interpretations on Theorem~\ref{finalThm}: (1)~We found it surprising that the overall behaviour of the ergodic sums of $f$ is quite smooth and stable considering the fact that as $\lambda$ increases $f^n(x)$ go through a quite a bit of qualitative changes from a stable fixed point, attracting periodic orbit of different periods, and finally chaotic behaviours. (2)~The gradual increase of the ergodic sum of $f$ as $\lambda$ increases was unexpected (for us). We knew that as $\lambda$ increases, $f$ takes more extreme values (very high/very low), but knew nothing about their distributions. To give some (mathematical) explanation for the behaviour (2), we obtained the estimates of distributions of $f^n(p)$ for various $\lambda$ using $10000$ terms in Figures~\ref{fig16},~\ref{fig17}, and~\ref{fig18} (see Figure~\ref{fig8} also).

Our computation shows that: (1)~The "shapes" of the distributions of $f^n(s)$ for various $\lambda$ look similar: the density is high for a low range and low for a high range. (2)~$f$ takes more extreme values as $\lambda$ becomes large, however, the extension of the upper bound is much greater than that of the lower bound (meaning that for $E=[f(s), f^2(s)+s]$, as $\lambda$ becomes large, $f(s)$ gets smaller a little bit, but $f^2(s)+s$ gets larger quite a bit), as shown in Figures~\ref{fig16},\ref{fig17},\ref{fig18}, and~\ref{fig8}, (3)~The distribution gets smoother as $\lambda$ becomes large. 

We conclude that the gradual increase of the ergodic sum of $f$ happens as $\lambda$ increases because: (1)~$f$ takes more extreme values with a great extension of the upper bound (and with a little extension of the lower bound) as $\lambda$ increases, (2)~The shapes of the distributions do not change much as $\lambda$ increases (although it gets smoother), (3)~Thus, the average of $f^n(s)$ goes up as $\lambda$ increases. We do not know why the overall behaviour of ergodic sums of $f$ is so smooth and stable. Also, we do not know why the density curve gets smoother as $\lambda$ increases. We leave these issues for a future work.    

\begin{figure}[h!]
	\begin{center}
    	\includegraphics[scale=0.6]{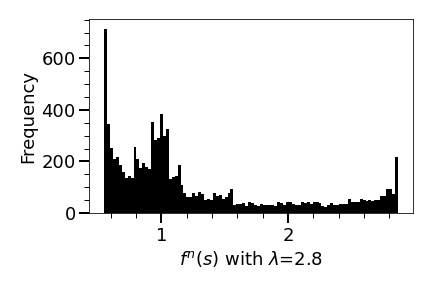}
	\end{center}
    \caption{Density of $f^n(s)$ with $\lambda=2.8$}\label{fig16}
\end{figure}

\begin{figure}[h!]
	\begin{center}
    	\includegraphics[scale=0.6]{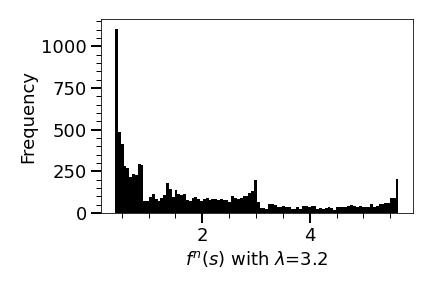}
	\end{center}
    \caption{Density of $f^n(s)$ with $\lambda=3.2$}\label{fig17}
\end{figure}

\begin{figure}[h!]
	\begin{center}
    	\includegraphics[scale=0.6]{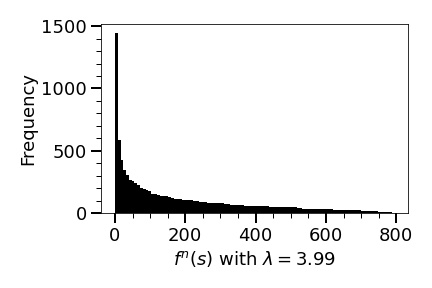}
	\end{center}
    \caption{Density of $f^n(s)$ with $\lambda=3.99$}\label{fig18}
\end{figure}

Theorem~\ref{finalThm} (and the whole results in this paper) says that a naive estimate of the ergodic sums of $f$ (estimate of the future) using a reasonably large number of terms (say $5000~10000$ terms) is not too bad. To end the paper, we quote a deep result of Avila (2014 fields medalist) and others~\cite[Sec.~3.1, Theorem B]{Avila-stochastic-Invent} that supports Theorem~\ref{finalThm}. 
\begin{prop}\label{Avila}
In any non-trivial real analytic family of quasiquadratic maps (that contains $S$-unimodal maps), (Lebesgue) almost any map is either regular (i.e., it has an attracting cycle) or stochastic (i.e., it has an acim). 
\end{prop}
\begin{rem} A technical note: "non-triviality" is guaranteed for our set of maps $f$ (parametrised by $\lambda$) since there exist two maps in this set that are not topologically conjugate. For example, take $f$ with $\lambda=1.5$ (non-chaotic) and $f$ with $\lambda=3.61$ (chaotic). See~\cite[Sec.~2.8, Sec.~2.13, Sec.~3.1]{Avila-stochastic-Invent} for the precise definitions of "non-trivial" and "quasiquadratic" (those are a bit too technical to state here). Also, see~\cite{Avila-unimodal-Annals}, and~\cite{Lyubich-forty-Modern} for more on this.
\end{rem}
\begin{rem}
We need "Lebesgue almost" (or "except a set of measure zero") in Theorems~\ref{finalThm},~\ref{goodLThm}, and Proposition~\ref{Avila} since the following (anomalous) examples are known, see~\cite{Hofbauer-asymptotic-MathPhys} and~\cite{Johnson-singular-MathPhys}: for a quadratic map $T_\lambda(x)=\lambda x(1-x)$ (parametrised by $\lambda$), there exists $\lambda$ such that $T_\lambda$ does not have an attracting periodic orbit and shows a chaotic behaviour, but does not have an acim. We expect that for our $f$, we obtain examples of the same properties (although we have not checked yet). The point is that we do not care such anomalous cases since the $\lambda$ values corresponding to such examples are of Lebesgue measure zero and our approach in this (and the last) section is probabilistic.  
\end{rem}

\section*{Acknowledgements}
This research was supported by a JSPS grant-in-aid for early-career scientists (22K13904) and an Alexander von Humboldt Japan-Germany joint research fellowship. 
\bibliography{econbib}

\end{document}